\newtheorem{theorem}{Theorem}
\newtheorem{remark}{Remark}
\begin{document}
\title{On transverse stability of discrete line solitons}
\author{Dmitry E. Pelinovsky$^{1}$ and Jianke Yang$^{2}$}

\address{
$^1$ Department of Mathematics and Statistics, McMaster University,
Hamilton, Ontario, Canada, L8S 4K1 \\
$^2$ Department of Mathematics and Statistics, University of
Vermont, Burlington, VT 05401, USA }

\begin{abstract}
We obtain sharp criteria for transverse stability and instability of
line solitons in the discrete nonlinear Schr\"{o}dinger equations on
one- and two-dimensional lattices near the anti-continuum limit. On
a two-dimensional lattice, the fundamental line soliton is proved to
be transversely stable (unstable) when it bifurcates from the $X$
($\Gamma$) point of the dispersion surface. On a one-dimensional
(stripe) lattice, the fundamental line soliton is proved to be
transversely unstable for both signs of transverse dispersion. If
this transverse dispersion has the opposite sign to the discrete
dispersion, the instability is caused by a resonance between
isolated eigenvalues of negative energy and the continuous spectrum
of positive energy. These results hold for both focusing and
defocusing nonlinearities via a staggering transformation. When the
line soliton is transversely unstable, asymptotic expressions for
unstable eigenvalues are also derived. These analytical results are
compared with numerical results, and perfect agreement is obtained.
\end{abstract}

\maketitle

\section{Introduction}

One-dimensional solitons, when viewed in two spatial dimensions,
become line solitons which are uniform along the line direction
(called transverse direction). Thus an important physical question
is the transverse stability of line solitons to transverse
perturbations. It is well known that in homogeneous media, line
solitons in the nonlinear Schr\"odinger (NLS) equation and other
related wave equations are always transversely unstable
\cite{ZakRub} (see also \cite{TI_saturable,SHG_TI_neck,SHG_TI_snake}
for applications in optics and \cite{KivPel,Yang_SIAM} for reviews).
This instability has been observed in recent optical experiments
\cite{Gorza,Gorza2,Mamaev}. In the presence of a one-dimensional
periodic potential, many line solitons are still transversely
unstable \cite{Aceves_semi_TI,1Dlattice_TI}. To suppress this
transverse instability, various techniques have been proposed
\cite{Anasta,Mussli,MussYang,Yang1,Yang2}. In particular, it was
shown numerically in \cite{Yang1,Yang2} that when a one- or
two-dimensional periodic potential is included in the continuous NLS
equation, this transverse instability can be completely eliminated
if the line soliton bifurcates from certain symmetry points of the
dispersion surface. But in the corresponding discrete nonlinear
Schr\"{o}dinger (dNLS) equations, discrete line solitons in
one-dimensional lattices are numerically found to be still
transversely unstable \cite{Yang2}, highlighting the difference
between continuous and discrete NLS models.

In this article, we analytically investigate transverse stability of
line solitons in the dNLS equations on two-dimensional (square) and
one-dimensional (stripe) lattices. Near the anti-continuum limit, we
derive sharp stability criteria for these discrete line solitons. We
prove for a two-dimensional lattice that the fundamental line
soliton is transversely stable (unstable) when it bifurcates from
the $X$ ($\Gamma$) point of the dispersion surface. For a
one-dimensional (stripe) lattice, the fundamental line soliton is
proven to be transversely unstable for both signs of transverse
dispersion. These results hold for both focusing and defocusing dNLS
equations via a staggering transformation. For unstable line
solitons, their unstable eigenvalues are also derived
asymptotically. We also investigate the transverse stability of line
solitons numerically both near the anti-continuum limit and away
from it. Near the anti-continuum limit, the numerical results fully
agree with the analytical results. Away from the anti-continuum
limit, we reveal additional bifurcations of unstable eigenvalues
which cannot be captured by the theoretical analysis.

This article is organized as follows. In Section \ref{sec:2D}, we
consider the transverse stability of discrete line solitons on a
two-dimensional lattice. We show that the entire solution family
bifurcating from the $\Gamma$ point is transversely unstable, whereas
the solution family bifurcating from the $X$
point is transversely stable in the anti-continuum limit. In Section
\ref{sec:1D}, we consider the transverse stability of discrete line
solitons on a one-dimensional lattice, and show that they are always
unstable. Numerical results and their comparison with the theory are
reported in Section \ref{sec:num}. Section \ref{sec:summary}
concludes the paper with discussion of open problems.

Before we start, we first introduce some mathematical notations
which will be used in later analysis. If $\{ \psi_n\}_{n \in
\mathbb{Z}}$ is a bi-infinite sequence (i.e., a sequence which is
infinite in both directions), and $\mathbb{Z}$ is the set of
integers, then $\psi$ denotes the vector for this sequence in some
vector space such as $l^2(\mathbb{Z})$ or, more generally,
$l^p(\mathbb{Z})$ for $p
\geq 1$. Here $l^2(\mathbb{Z})$ denotes the space of bi-infinite
squared-summable sequences with the norm $\| \psi \|_{l^2} \equiv
\left( \sum_{n \in \mathbb{Z}} |\psi_n|^2 \right)^{1/2}$ and the
inner product $\langle \psi, \varphi \rangle \equiv \sum_{n \in
\mathbb{Z}} \bar{\psi}_n \varphi_n$, with the overbar for complex
conjugation, and $l^p(\mathbb{Z})$ denotes the space of sequences
with the norm $\| \psi \|_{l^p} \equiv \left( \sum_{n \in
\mathbb{Z}} |\psi_n|^p \right)^{1/p}$.

\section{Two-dimensional lattice} \label{sec:2D}

In this section, we study transverse stability of line
solitons in the dNLS equation on a two-dimensional lattice,
\begin{equation} \label{dNLS}
i \frac{d u_{m,n}}{d t} + \epsilon (u_{m+1,n} + u_{m-1,n} + u_{m,n+1} + u_{m,n-1} - 4 u_{m,n}) + |u_{m,n}|^2
u_{m,n} = 0,
\end{equation}
where $(m,n) \in \mathbb{Z}^2$, $u_{m,n}$ are complex-valued
amplitudes that depend on the evolution time $t$,  and $\epsilon$ is
the lattice-coupling constant. Here the sign of nonlinearity has
been normalized to be unity through a scaling of $\epsilon$ and $t$.
The anti-continuum limit $\epsilon = 0$ of zero coupling between
lattice sites was found to be very attractive for many analytical
studies on the existence and stability of discrete solitons in the
framework of the dNLS equation \cite{Chong,MA94,PKF1,PelSak}.
Detailed account of mathematical results obtained in the
anti-continuum limit can be found in the monograph \cite{Pel-book}.

In the above dNLS equation, the defocusing case $\epsilon < 0$ can
be mapped to the focusing case $\epsilon > 0$ by the staggering
transformation
\begin{equation}
u_{m,n}(t) = (-1)^{m+n} v_{m,n}(t) e^{-8 i \epsilon t}.
\end{equation}
If $u$ solves the dNLS equation (\ref{dNLS}), then $v$ solves the
same equation with $\epsilon$ replaced by $-\epsilon$. Thus, in what
follows, we will consider the focusing case ($\epsilon>0$) only.

The linear dispersion surface of the dNLS equation (\ref{dNLS}) is
given by the function
\begin{equation*}
\omega(k,p) = \epsilon (4 - 2 \cos(k) - 2 \cos(p)) = 4 \epsilon \left[ \sin^2\left(\frac{k}{2}\right)
+ \sin^2\left(\frac{p}{2}\right) \right],
\end{equation*}
where wavenumbers $(k,p)$ reside in the first Brillouin zone
$[-\pi,\pi] \times [-\pi,\pi]$. This dispersion relation can be
derived by substituting the discrete Fourier modes $u_{m,n}(t) =
e^{i k m + i p n - i \omega t}$ into the linear dNLS equation
(\ref{dNLS}).

To understand bifurcations of stationary line solitons in Eq.
(\ref{dNLS}), we need to classify the stationary points of the
dispersion surface, where $\nabla \omega(k,p) = 0$. In the semi-open
Brillouin zone $(-\pi,\pi] \times (-\pi,\pi]$, there are only four
stationary points, which are commonly labeled as $\Gamma$, $X$,
$X'$, and $M$.

\begin{itemize}
\item[$\Gamma$:] $(k,p) = (0,0)$ is the minimum point of the dispersion surface with $\omega(0,0) = 0$;

\item[$X$:] $(k,p) = (0,\pi)$ is a saddle point of the dispersion surface with $\omega(0,\pi) = 4 \epsilon$;

\item[$X'$:] $(k,p) = (\pi,0)$ is the other saddle point of the dispersion surface with $\omega(\pi,0) = 4 \epsilon$;

\item[$M$:] $(k,p) = (\pi,\pi)$ is the maximum point of the dispersion surface with $\omega(\pi,\pi) = 8 \epsilon$.
\end{itemize}

Discrete line solitons may bifurcate from any stationary point
provided that the effective continuous NLS equation is focusing
\cite[Section 1.1.2]{Pel-book}. Let us consider each of the
possibilities. For definiteness, we assume that the line soliton is
localized along the $m$-direction and uniform along the
$n$-direction.

\begin{itemize}
\item[$\Gamma$:] For $(k,p) = (0,0)$, we substitute $u_{m,n}(t) = e^{i \mu^2 t} \psi_m$ and obtain
the stationary 1D dNLS equation
\begin{equation}
\label{1D-DNLS}
- \mu^2 \psi_m + \epsilon (\psi_{m+1} + \psi_{m-1} - 2 \psi_m) + |\psi_m|^2 \psi_m = 0,
\end{equation}
which admits discrete solitons for any $\epsilon > 0$ and $\mu
\ne 0$ \cite{Hermann,QinXiao}. Moreover, for fixed $\epsilon >
0$, the fundamental discrete soliton is approximated by the NLS
soliton
\begin{equation} \label{line-soliton-cont}
\psi_m \to \sqrt{2} \hspace{0.04cm} \mu \; {\rm sech}\left(\frac{\mu m}{\sqrt{\epsilon}}\right) \quad \mbox{\rm as} \quad \mu \to 0.
\end{equation}
This approximation was rigorously justified in the recent work
\cite{Bambusi} (see also \cite[Section 2.3.2]{Pel-book}).

\item[$X$:] For $(k,p) = (0,\pi)$, we substitute $u_{m,n}(t) = (-1)^n e^{i (\mu^2 - 4 \epsilon) t} \psi_m$ and obtain
the same  stationary dNLS equation (\ref{1D-DNLS}), which admits
the discrete solitons for any $\epsilon > 0$ and $\mu \ne 0$.

\item[$X'$:] For $(k,p) = (\pi,0)$, we substitute
$u_{m,n}(t) = (-1)^m e^{i (-\mu^2 - 4 \epsilon) t} \psi_m$ and obtain
the stationary 1D dNLS equation
\begin{equation} \label{1D-DNLS-bad}
\mu^2 \psi_m - \epsilon (\psi_{m+1} + \psi_{m-1} - 2 \psi_m) + |\psi_m|^2 \psi_m = 0.
\end{equation}
This stationary equation admits no discrete solitons for any
$\epsilon > 0$ \cite[Lemma 3.10]{Pel-book}. Indeed, by
projecting Eq. (\ref{1D-DNLS-bad}) to $\psi$ and denoting
$(\Delta \psi)_m \equiv \psi_{m+1} + \psi_{m-1} - 2 \psi_m$, we
obtain a contradiction
$$
\mu^2 \| \psi \|^2_{l^2} + \epsilon \langle \psi,(-\Delta) \psi \rangle + \| \psi \|^4_{l^4} = 0,
$$
where each term on the left side is positive definite.

\item[$M$:] For $(k,p) = (\pi,\pi)$, we substitute $u_{m,n}(t) = (-1)^{m+n} e^{i (-\mu^2 - 8 \epsilon) t} \psi_m$
and obtain the same stationary dNLS equation
(\ref{1D-DNLS-bad}), which admits no discrete solitons for any
$\epsilon > 0$.
\end{itemize}

From the above analysis, we see that only two bifurcations of
fundamental discrete line solitons occur and the bifurcation
points are $\Gamma$ and $X$. In the absence of transverse perturbations,
these fundamental line solitons are stable. In the following, we will analyze
transverse stability of these fundamental line solitons in the
anti-continuum limit $\epsilon \to 0$ for fixed $\mu
> 0$ (or equivalently, $\mu \to \infty$ for fixed $\epsilon > 0$).

Before the transverse-stability analysis in the anti-continuum
limit, it is useful to recall the transverse-stability results in
the opposite (continuum) limit that arises when $\epsilon \to
\infty$ for fixed $\mu > 0$ (or equivalently, $\mu \to 0$ for fixed
$\epsilon > 0$).

\begin{itemize}
\item[$\Gamma$:] For $(k,p) = (0,0)$, we substitute
$$
u_{m,n}(t) =\hspace{0.04cm} U(X,Y,t) e^{i \mu^2 t}, \  X = \frac{m}{\sqrt{\epsilon}}, \ Y = \frac{n}{\sqrt{\epsilon}}
$$
into Eq. (\ref{dNLS}). Assuming smoothness of the envelope
function $U(X,Y,t)$, we obtain an elliptic 2D NLS
equation for $U(X,Y,t)$ as $\epsilon \to \infty$:
\begin{equation}
\label{elliptic-NLS}
i \frac{\partial U}{\partial t} + \frac{\partial^2 U}{\partial X^2}
+ \frac{\partial^2 U}{\partial Y^2} + (|U|^2 - \mu^2) U = 0.
\end{equation}
The line soliton (\ref{line-soliton-cont}) is transversely
unstable in this elliptic NLS equation (\ref{elliptic-NLS}) due
to neck-type instability (see \cite{KivPel,Yang_SIAM} and
references therein).

\item[$X$:] For $(k,p) = (0,\pi)$, we substitute
$$
u_{m,n}(t) = (-1)^n U(X,Y,T) e^{i (\mu^2 - 4 \epsilon) t}, \ X = \frac{m}{\sqrt{\epsilon}}, \ Y = \frac{n}{\sqrt{\epsilon}}
$$
into Eq. (\ref{dNLS}). Assuming smoothness of the envelope
function $U(X,Y,t)$, we obtain a hyperbolic 2D NLS
equation for $U(X,Y,t)$ as $\epsilon \to \infty$:
\begin{equation}
\label{hyperbolic-NLS}
i \frac{\partial U}{\partial t} + \frac{\partial^2 U}{\partial X^2}
- \frac{\partial^2 U}{\partial Y^2} + (|U|^2 - \mu^2) U = 0.
\end{equation}
The line soliton (\ref{line-soliton-cont}) is also transversely
unstable in this hyperbolic NLS equation (\ref{hyperbolic-NLS})
due to snaking-type instability (see \cite{DecPel,Yang_SIAM} and
references therein).
\end{itemize}

From reductions to 2D NLS equations (\ref{elliptic-NLS}) and (\ref{hyperbolic-NLS}),
we see that discrete line solitons
(\ref{line-soliton-cont}) are always transversely unstable
in the continuum limit. Thus it
is surprising that discrete line solitons were reported to be
transversely stable far from the continuum limit when they bifurcate
from the X point of the dispersion surface \cite{Yang1}.
Line solitons bifurcated from the $\Gamma$ point, however, remain
transversely unstable for all values of $\epsilon$ (i.e.,
both near the continuum limit and away from it) \cite{Yang1}. Below
we shall prove these numerical observations by rigorous
spectral-stability analysis that relies on the count of eigenvalues
of negative energy \cite{ChPel,KKS,P05}. In addition, asymptotic
expressions for unstable eigenvalues will also be derived in the anti-continuum limit.

\subsection{Instability of line solitons bifurcating from the $\Gamma$ point}

Discrete line solitons bifurcating from the $\Gamma$ point are of
the form
\begin{equation} \label{s:gamma}
u_{m,n}(t) = e^{i \mu^2 t} \psi_m,
\end{equation}
where $\psi$ satisfies the stationary 1D dNLS equation
(\ref{1D-DNLS}). It can be easily shown that $\{ \psi_m \}_{m \in
\mathbb{Z}}$ in these discrete solitons is real-valued (up to
multiplication by $e^{i \alpha}$ for real $\alpha$, i.e., $\alpha
\in \mathbb{R}$) \cite[Lemma 3.11]{Pel-book}. Perturbing these line
solitons as
$$
u_{m,n}(t) = e^{i \mu^2 t} \left[ \psi_m + v_{m,n}(t) \right],
$$
and substituting it into the dNLS equation (\ref{dNLS}), we obtain
the linearized dNLS equation as
\begin{eqnarray*}
&& \hspace{-1.3cm} i \frac{d v_{m,n}}{d t} - \mu^2 v_{m,n} + \epsilon
(v_{m+1,n} + v_{m-1,n} + v_{m,n+1} + v_{m,n-1} - 4 v_{m,n})
+ \psi_m^2 (2 v_{m,n} + \bar{v}_{m,n}) = 0.
\end{eqnarray*}
For normal modes
\begin{equation} \label{normal-mode}
v_{m,n}(t) = e^{\lambda t + i p n} \left( U_{m} + i W_{m} \right),
\hspace{0.2cm} \bar{v}_{m,n}(t) = e^{\lambda t + i p n} \left( U_{m} - i W_{m} \right),
\end{equation}
we obtain the standard form of the eigenvalue problem
\begin{equation}
\label{lin-eigen}
L_+(p) U = -\lambda W, \quad L_-(p) W = \lambda U,
\end{equation}
where $L_{\pm}(p)$ are $p$-dependent 1D discrete Schr\"{o}dinger
operators,
\begin{eqnarray}  \label{e:LpmGamma}
\begin{array}{l}
\hspace{-0.7cm}
(L_+(p) U)_m \equiv - \epsilon
\left[U_{m+1} + U_{m-1} + 2 \cos(p) U_m - 4 U_{m}\right]  + \mu^2 U_m - 3 \psi_m^2 U_m,   \\
\hspace{-0.7cm}
(L_-(p) W)_m \equiv - \epsilon
\left[W_{m+1} + W_{m-1} + 2 \cos(p) W_m - 4 W_{m}\right]  + \mu^2 W_m - \psi_m^2 W_m. \end{array}
\end{eqnarray}
It is easy to see that eigenvalues $\lambda$ in the above
linear-stability problem always appear as quadruples $(\lambda,
\bar{\lambda}, -\lambda, -\bar{\lambda})$ when $\lambda$ is complex
or as pairs $(\lambda, -\lambda)$ when $\lambda$ is real or purely
imaginary.

Among the two parameters $\mu$ and $\epsilon$ in the above
eigenvalue problem, the ratio $\epsilon/\mu^2$ is invariant with
respect to a scaling transformation. The anti-continuum limit
corresponds to the limit of $\epsilon/\mu^2 \to 0$. Without loss of
generality, we fix $\mu = 1$ and consider small values of
$\epsilon>0$ below.

We are interested in transverse stability of the {\em fundamental}
line soliton $\psi_m$, which is positive for all $m \in \mathbb{Z}$
and confined to a single lattice site, say at $m = 0$, in the
anti-continuum limit $\epsilon \to 0$. Because the stationary
equation (\ref{1D-DNLS}) is analytic in $\epsilon$ and polynomial in
$\psi$, whereas the difference operator is bounded, the dependence
of $\psi$ on $\epsilon$ is real analytic near $\epsilon = 0$
\cite[Theorem 3.8]{Pel-book}. Using the regular perturbation method,
we can easily obtain the power series expansion for $\psi$ as
\begin{equation}
\label{expansion-1}
\psi_m = \delta_{m,0} + \epsilon (\delta_{m,1} + \delta_{m,0} + \delta_{m,-1} ) +
\mathcal{O}(\epsilon^2),
\end{equation}
where $\delta_{m,m'}$ is the Kronecker notation with $\delta_{m,m'}
= 1$ for $m = m'$ and 0 otherwise.

We shall now present the instability theorem for fundamental
discrete line solitons bifurcating from the $\Gamma$ point. This
fundamental line soliton exists for any $\epsilon > 0$
\cite{Hermann,QinXiao} (see also \cite[Theorem 3.12]{Pel-book}). Our
instability theorem below applies to all values of $\epsilon > 0$,
except that the asymptotic expression for the unstable eigenvalue is
valid only near the anti-continuum limit $\epsilon \to 0$.

\begin{theorem}
\label{theorem-1} Consider the fundamental discrete line soliton
(\ref{s:gamma}) bifurcating from the $\Gamma$ point in the dNLS
equation (\ref{dNLS}). For any $\epsilon > 0$, there is
$p_0(\epsilon) \in (0,\pi]$ such that for any $p \in
(-p_0(\epsilon), p_0(\epsilon)) \backslash\{0\}$ the
linear-stability problem (\ref{lin-eigen}) admits a symmetric pair
of real eigenvalues $\pm \lambda(\epsilon,p)$ with
$\lambda(\epsilon,p) > 0$. Hence this fundamental line soliton is
transversely unstable for all $\epsilon > 0$.  In addition,
$p_0(\epsilon)=\pi$ if $0<\epsilon<\frac{1}{2}$. Furthermore, for any
$p \in [-\pi,\pi]$, the eigenvalue $\lambda(\epsilon,p)$ has the
following asymptotic expansion in the anti-continuum limit,
\begin{equation}
\label{result-theorem-1}
\lambda^2(\epsilon,p) = 8 \epsilon \sin^2\left(\frac{p}{2}\right) + \mathcal{O}(\epsilon^2) \hspace{0.2cm}
\mbox{\rm as} \hspace{0.2cm} \epsilon \to 0.
\end{equation}
\end{theorem}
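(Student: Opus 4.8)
The plan is to analyze the eigenvalue problem (\ref{lin-eigen}) perturbatively around the anti-continuum limit $\epsilon = 0$, and then to leverage the explicit structure of the spectrum to push the instability conclusion to all $\epsilon > 0$. I would begin by examining the decoupled operators $L_\pm(p)$ at $\epsilon = 0$. There, using the leading-order soliton $\psi_m = \delta_{m,0}$ from (\ref{expansion-1}), both operators become diagonal multiplication operators. With $\mu = 1$, one finds $(L_+(0 \text{-limit}))_m = 1 - 3\delta_{m,0}$ and $(L_-)_m = 1 - \delta_{m,0}$, so that $L_-$ has a simple zero eigenvalue at $m=0$ (eigenvector $e_0$) with the rest of the spectrum sitting at $1$, while $L_+$ has a single negative eigenvalue $-2$ at $m=0$. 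The presence of the transverse wavenumber $p$ enters through the diagonal shift $2\epsilon\cos(p) - 4\epsilon$, which at $\epsilon=0$ is invisible but becomes the driver of the instability at next order.

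The key mechanism I would track is the splitting of the zero eigenvalue of $L_-$ as $\epsilon$ and $p$ turn on. The soliton possesses a phase-rotation symmetry, so at $p=0$ the vector $\psi$ is an exact zero mode of $L_-(0)$ for every $\epsilon$; this gives $\lambda = 0$ at $p = 0$, consistent with (\ref{result-theorem-1}). For $p \ne 0$, I would do a Lyapunov--Schmidt reduction on the near-zero eigenvalue of $L_-(p)$, projecting onto the one-dimensional kernel spanned (at leading order) by $e_0$. The diagonal term $2\epsilon[\cos(p) - \cos(0)] = -4\epsilon\sin^2(p/2)$ acts as a direct perturbation lifting the zero eigenvalue of $L_-(p)$ to a small positive quantity of order $\epsilon \sin^2(p/2)$. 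Composing the two equations in (\ref{lin-eigen}) into $L_+(p) L_-(p) W = -\lambda^2 W$ (equivalently $\lambda^2 = -\langle W, L_+ L_- W\rangle / \|W\|^2$ along the appropriate mode), and noting that on the relevant near-kernel vector $L_+(p)$ acts with a value near $1 - 3\cdot 0 + \mu^2 = 1$ (positive) while the small eigenvalue of $L_-(p)$ is itself positive, the product $L_+ L_-$ restricted to this mode is \emph{positive}, forcing $\lambda^2 > 0$ and hence a real unstable eigenvalue. Carrying the expansion to leading order should reproduce exactly $\lambda^2 = 8\epsilon\sin^2(p/2) + \mathcal{O}(\epsilon^2)$; the factor $8$ comes from the product of the $L_-$ eigenvalue $\sim 4\epsilon\sin^2(p/2)$ with the $L_+$ eigenvalue $\sim 2$, and I would verify this constant carefully against the next-order correction to $\psi$ in (\ref{expansion-1}).

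To obtain the sharper statements, I would separate two regimes. For the existence of a real unstable pair for $p$ in a punctured neighborhood of $0$ and all $\epsilon > 0$, I would argue by continuity: the small positive eigenvalue of $L_-(p)$ persists as long as it does not collide with the continuous spectrum of $L_-(p)$, whose bottom edge sits at $\mu^2 + 2\epsilon(1 - \cos p)$; solving for when the isolated eigenvalue merges into the continuum determines $p_0(\epsilon)$. To establish $p_0(\epsilon) = \pi$ for $0 < \epsilon < \frac{1}{2}$, I would show that the isolated eigenvalue of $L_-(p)$ remains strictly below the continuous spectrum for \emph{all} $p \in (0, \pi]$ precisely in this parameter range, a threshold condition I expect to reduce to an explicit inequality on $\epsilon$ governing when a bound state of the discrete Schr\"odinger operator $L_-(p)$ survives up to $p = \pi$. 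The main obstacle will be the eigenvalue-counting and continuation argument that promotes the local (anti-continuum) instability to \emph{all} $\epsilon > 0$: one must rule out that $\lambda(\epsilon, p)$ ever returns to the imaginary axis as $\epsilon$ grows, which requires the negative-energy eigenvalue count machinery of \cite{ChPel,KKS,P05} to certify that the relevant eigenvalue cannot leave the real axis without either hitting zero (excluded by the kernel structure away from $p=0$) or colliding with another eigenvalue of opposite Krein signature. Handling that global persistence, rather than the leading-order asymptotics, is where the real work lies.
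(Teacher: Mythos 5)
Your overall skeleton (anti-continuum perturbation localized at the central site, plus the negative-index machinery of \cite{ChPel,KKS,P05}) is the right one, but two of your key steps are wrong as stated. First, the instability mechanism. The near-kernel mode of $L_-(p)$ is concentrated at $m=0$, where $\psi_0^2 = 1+\mathcal{O}(\epsilon)$, so on that mode $L_+(p)$ acts with value $\mu^2 - 3\psi_0^2 + \mathcal{O}(\epsilon) = -2+\mathcal{O}(\epsilon)$, not $+1$ as you claim (your ``$-3\cdot 0$'' should be $-3\cdot 1$). Moreover, your own relation $L_+(p)L_-(p)W = -\lambda^2 W$ means that positivity of $L_+L_-$ on the mode would give $\lambda^2<0$, i.e.\ \emph{stability}. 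The instability comes precisely from the product of the small \emph{positive} eigenvalue $4\epsilon\sin^2(p/2)$ of $L_-(p)$ with the \emph{negative} eigenvalue $\approx -2$ of $L_+(p)$, which makes $-\lambda^2 \approx -8\epsilon\sin^2(p/2)<0$; you land on the correct constant $8$ only because you silently dropped the sign of $-2$. This sign structure is exactly what the count (\ref{count-1}) encodes: $n(L_-(p))=0$ forces $N_{\rm real}^+=N_{\rm imag}^-=N_{\rm comp}=0$, and then $N_{\rm real}^- = n(L_+(p)) = 1$. That count holds pointwise in $(\epsilon,p)$, so no continuation or ``rule out return to the imaginary axis'' argument in $\epsilon$ is needed for the global statement.

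Second, your determination of $p_0(\epsilon)$ rests on a collision that never happens. Since $L_-(p)=L_-(0)+2\epsilon\left[1-\cos p\right]I$ is a shift by a multiple of the identity, the isolated eigenvalue of $L_-(p)$ and the bottom edge of its continuous spectrum move up by exactly the same amount; their gap is identically $\mu^2=1$, so the bound state of $L_-(p)$ never merges into the continuum, and in any case the strictly positive operator $L_-(p)$ plays no role in limiting the instability band. The threshold $p_0(\epsilon)$ is where the single \emph{negative} eigenvalue of $L_+(p)=L_+(0)+2\epsilon\left[1-\cos p\right]I$ crosses zero. The paper obtains the explicit range by testing on $\psi$: the stationary equation gives $\|\psi\|_{l^4}^4 \geq \|\psi\|_{l^2}^2$, hence $\langle L_+(p)\psi,\psi\rangle \leq 2\left\{\epsilon\left[1-\cos p\right]-1\right\}\|\psi\|_{l^2}^2$, which is negative for all $p\in[-\pi,\pi]$ once $\epsilon<\tfrac12$; that is the source of $p_0(\epsilon)=\pi$ for $0<\epsilon<\tfrac12$. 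You would need to replace your $L_-$-based criterion with this (or an equivalent) argument controlling $n(L_+(p))$.
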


\begin{proof}
We first rewrite operators $L_{\pm}(p)$ in (\ref{e:LpmGamma}) as
$$
L_{\pm}(p) = L_{\pm}(0) + 2 \epsilon \left[1 - \cos(p)\right].
$$
These are bounded operators from $l^2(\mathbb{Z})$ to $l^2(\mathbb{Z})$,
which have both continuous and discrete spectra.

The stationary equation (\ref{1D-DNLS}) is simply $L_-(0) \psi = 0$.
Because $\psi$ is positive, $0$ is at the bottom of spectrum of
$L_-(0)$, so that $L_-(0)$ is non-negative \cite{Sturm1,Sturm2}.
By the perturbation theory, $L_-(p)$ is strictly positive for any $p \in [-\pi, \pi]
\backslash\{0\}$ and $\epsilon > 0$. On the other hand, $L_+(0)$ has
at least one negative eigenvalue because
$$
\langle L_+(0) \psi, \psi \rangle = - 2 \| \psi \|_{l^4}^4 < 0,
$$
where $\| \psi \|_{l^4}^4 = \sum_{n \in \mathbb{Z}} |\psi_n|^4$.
Moreover, in the limit $\epsilon \to 0$, only one negative
eigenvalue of $L_+(0)$ exists, which is the eigenvalue $-2$
associated with the central site $m = 0$. By the variational
arguments \cite{Hermann}, this negative eigenvalue persists and
remains the only negative eigenvalue of $L_+(0)$ for any $\epsilon >
0$. Since $L_+(p) \geq L_+(0)$, $L_+(p)$ has at most one negative
eigenvalue and no zero eigenvalues. It follows from the stationary
equation (\ref{1D-DNLS}) with $\mu = 1$ that
$$
\| \psi \|_{l^4}^4 = \| \psi \|^2_{l^2} + \epsilon \langle \psi,(-\Delta) \psi \rangle \geq \| \psi \|^2_{l^2},
$$
where $\Delta$ is the 1D discrete Laplacian. Thus we obtain
\begin{eqnarray*}
\langle L_+(p) \psi, \psi \rangle  =    - 2 \| \psi \|_{l^4}^4 + 2 \epsilon \left[ 1 - \cos(p)\right] \| \psi \|^2_{l^2}
  \leq 2 \left\{\epsilon \left[ 1 - \cos(p)\right] - 1\right\} \| \psi \|^2_{l^2},
\end{eqnarray*}
hence $L_+(p)$ admits a negative eigenvalue for any $p \in
[-\pi,\pi]$ if $0<\epsilon < \frac{1}{2}$ and for at least small $p$
if $\epsilon > 0$ is arbitrary. In other words, for any $\epsilon >
0$, there is $p_0(\epsilon) \in (0,\pi]$ such that $L_+(p)$ has a
negative eigenvalue for any $p \in (-p_0(\epsilon), p_0(\epsilon))$.
Moreover, $p_0(\epsilon) = \pi$ at least for $0<\epsilon < \frac{1}{2}$.

For $p = 0$, the linear eigenvalue problem (\ref{lin-eigen}) admits
zero eigenvalue of algebraic multiplicity two for any $\epsilon > 0$
because $L_-(0)\psi=0$ and
$$
L_+(0) \frac{\partial \psi}{\partial (\mu^2)} = -\psi.
$$
This zero eigenvalue is destroyed when $p \neq 0$ and this may cause
instability when splitting of this double zero eigenvalue occurs
along the real axis. Using the negative index theory \cite{ChPel,KKS,P05,Pel-book},
we obtain:
\begin{equation}
\label{count-1}
\begin{array}{l}
N_{\rm real}^- + N_{\rm imag}^- + N_{\rm comp} = n(L_+(p)), \\
N_{\rm real}^+ + N_{\rm imag}^- + N_{\rm comp} = n(L_-(p)), \end{array}  \hspace{0.15cm}
p \in [-\pi, \pi] \backslash\{0\},
\end{equation}
where $N_{\rm real}^+$ ($N_{\rm real}^-$) are the numbers of real
positive eigenvalues $\lambda$ with positive (negative) quadratic
form $\langle L_+(p) U,U \rangle$ at the eigenvector $(U,W)$ of the
eigenvalue problem (\ref{lin-eigen}), $N_{\rm imag}^-$ is the number
of purely imaginary eigenvalues $\lambda$ with ${\rm Im}(\lambda) >
0$ and negative quadratic form $\langle L_+(p) U,U \rangle$, and
$N_{\rm comp}$ is the number of complex eigenvalues $\lambda$ with
${\rm Re}(\lambda) > 0$ and ${\rm Im}(\lambda) > 0$, counting their
algebraic multiplicities. Note that eigenvalues contributing to $N_{\rm imag}^-$ are called
eigenvalues with a negative Krein signature \cite{KKS}. The
eigenvalue-counting formula (\ref{count-1}) follows
directly from Theorem 4.5 of \cite{Pel-book} because operators $L_{\pm}(p)$
have no zero eigenvalues for any $p \neq 0$.

The preceding computations show that $n(L_-(p)) = 0$, and $n(L_+(p))
= 1$ for $p \in (-p_0(\epsilon), p_0(\epsilon))$. In these cases,
the index formula (\ref{count-1}) yields
$$
N_{\rm real}^- = 1, \quad N_{\rm real}^+ = N_{\rm imag}^- = N_{\rm comp} = 0,
$$
which proves the statement of Theorem 1 on transverse instability.
It remains to justify the asymptotic expansion
(\ref{result-theorem-1}) for the real positive eigenvalue
$\lambda(\epsilon,p)$ as $\epsilon \to 0$.

When $\epsilon = 0$, the spectral problem (\ref{lin-eigen}) with
$\mu = 1$ has three points in the spectrum: $\lambda = 0$ of
algebraic multiplicity two and $\lambda = \pm i$ of infinite
algebraic multiplicity. Continuous spectral bands bifurcate from the
points $\lambda = \pm i$ for $\epsilon \neq 0$. This bifurcation was
studied in detail in the recent work \cite{PelSak}, and no unstable
eigenvalues arise in this bifurcation. We shall now calculate the
splitting of the double zero eigenvalue for any fixed $p
> 0$ and small $\epsilon > 0$, using the expansion
(\ref{expansion-1}) near the anti-continuum limit.

We rewrite the eigenvalue problem
(\ref{lin-eigen}) with $\mu = 1$ at the central site $m = 0$ as follows:
\begin{eqnarray*}
\begin{array}{l}
\hspace{-0.5cm}
2 U_0 + \epsilon \left[U_{1} + U_{-1} + 2 \cos(p) U_0 + 2 U_0 \right] + \mathcal{O}(\epsilon^2) U_0 = \lambda W_0, \\
\hspace{-0.5cm}
\epsilon \left[W_{1} + W_{-1} + 2 \cos(p) W_0 - 2 W_0\right] + \mathcal{O}(\epsilon^2) W_0 = -\lambda U_0.\end{array}
\end{eqnarray*}
By using the scaling transformation $U = \sqrt{\epsilon}
\hspace{0.06cm} \mathcal{U}$, $W = \mathcal{W}$, and $\lambda =
\sqrt{\epsilon} \hspace{0.04cm}  \Lambda$, the system can be
rewritten in the equivalent form:
\begin{eqnarray*}
\hspace{-0.75cm} \begin{array}{l}
2 \mathcal{U}_0 + \epsilon (\mathcal{U}_{1} + \mathcal{U}_{-1} + 2 \cos(p) \mathcal{U}_0 + 2 \mathcal{U}_0 ) +
\mathcal{O}(\epsilon^2) \mathcal{U}_0 = \Lambda \mathcal{W}_0, \\
\mathcal{W}_{1} + \mathcal{W}_{-1} + 2 \cos(p) \mathcal{W}_0 - 2 \mathcal{W}_0
+ \mathcal{O}(\epsilon) \mathcal{W}_0 = - \Lambda \mathcal{U}_0.\end{array}
\end{eqnarray*}
At the adjacent sites $m = \pm 1$, the linear eigenvalue problem
(\ref{lin-eigen}) is
\begin{eqnarray*}
\hspace{-0.7cm} \begin{array}{l}
\mathcal{U}_{\pm 1} - \epsilon \left[\mathcal{U}_{\pm 2} + \mathcal{U}_{0} + 2 \cos(p) \mathcal{U}_{\pm 1} - 4 \mathcal{U}_{\pm 1} \right]   + \mathcal{O}(\epsilon^2) \mathcal{U}_{\pm 1} = -\Lambda \mathcal{W}_{\pm 1}, \\
\mathcal{W}_{\pm 1} - \epsilon \left[\mathcal{W}_{\pm 2} + \mathcal{W}_{0} + 2 \cos(p) \mathcal{W}_{\pm 1} - 4 \mathcal{W}_{\pm 1} \right]  + \mathcal{O}(\epsilon^2) \mathcal{W}_{\pm 1} = \epsilon \Lambda \mathcal{U}_{\pm 1},\end{array}
\end{eqnarray*}
since $\psi^2_{\pm 1} = \mathcal{O}(\epsilon^2)$. Similar equations
can be written for any $m \neq 0$.

For $\Lambda = \mathcal{O}(1)$, we have the reduction
$\mathcal{U}_{\pm m} = \mathcal{O}(\epsilon^m) \mathcal{U}_0$ and
$\mathcal{W}_{\pm m} = \mathcal{O}(\epsilon^m) \mathcal{W}_0$ for
any $m \in \mathbb{N}$, which enables us to close the leading-order
equations for $(\mathcal{U}_0, \mathcal{W}_0)$:
\begin{eqnarray*}
\begin{array}{l} 2 \mathcal{U}_0 + \mathcal{O}(\epsilon) \mathcal{U}_0 = \Lambda \mathcal{W}_0, \\
\left[2 \cos(p) - 2 \right] \mathcal{W}_0 + \mathcal{O}(\epsilon) \mathcal{W}_0 = - \Lambda \mathcal{U}_0.\end{array}
\end{eqnarray*}
After eliminating $\mathcal{W}_0$, we obtain the algebraic equation
for $\Lambda$ as
\begin{equation*}
\Lambda^2 = 2 (2 - 2 \cos(p)) + \mathcal{O}(\epsilon) = 8 \sin^2\left(\frac{p}{2}\right) + \mathcal{O}(\epsilon),
\end{equation*}
which then yields the asymptotic expansion (\ref{result-theorem-1}).
\end{proof}

\begin{remark}
For any fixed $p \in [-\pi, \pi] \backslash \{0\}$, we obtain
$\Lambda^2 > 0$ as $\epsilon \to 0$, which guarantees spectral
instability of these discrete line solitons for small $\epsilon >
0$. Note that the asymptotic formula (\ref{result-theorem-1}) is not
uniform as $\epsilon \to 0$ and $p \to 0$, and a different
perturbation theory is needed in the limit $p\to 0$ for fixed
$\epsilon > 0$ (see \cite{KivPel,Yang_SIAM} and references therein).
\end{remark}

\begin{remark}
Using the resolvent analysis from \cite{PelSak}, one can show that
the continuous spectral bands are the two line-segments on the
imaginary axis,
\begin{eqnarray*}
 i \lambda \in [-1-2 \epsilon (3 -
\cos(p)),-1-2\epsilon (1 - \cos(p))] \,
\cup  \,  [1+2\epsilon (1 - \cos(p)),1+2 \epsilon (3 - \cos(p))],
\end{eqnarray*}
whereas no discrete (isolated) eigenvalues bifurcate out from
the point $\lambda = \pm i$ as $\epsilon \neq 0$.
\end{remark}

\begin{remark}
In the continuous limit $\epsilon \to +\infty$, the discrete line
solitons are asymptotically described by the elliptic NLS equation
(\ref{elliptic-NLS}), where unstable real eigenvalues are restricted
to the $p$-interval $(-p_0(\epsilon),p_0(\epsilon) \backslash \{0\}$
with $p_0(\epsilon \to \infty) = \sqrt{3}$ (for $\mu=1$) \cite{KivPel,Yang_SIAM}.
Therefore, $p_0(\epsilon) < \pi$ for sufficiently large positive $\epsilon$.
\end{remark}

\subsection{Stability of line solitons bifurcating from the $X$ point}

Discrete line solitons bifurcating from the $X$ point are of the
form
\begin{equation} \label{s:xpoint}
u_{m,n}(t) = (-1)^n e^{i (\mu^2 - 4 \epsilon) t} \psi_m,
\end{equation}
where $\psi$ is a real-valued solution of the stationary 1D dNLS
equation (\ref{1D-DNLS}). Notice that these solitons at adjacent
lattice sites along the transverse $n$-direction are out-of-phase
with each other, which contrasts the line solitons bifurcating
from the $\Gamma$ point, where the solitons at adjacent lattice sites along
the transverse direction are in-phase with each other.

Linearizing the dNLS equation (\ref{dNLS}) around this solution, we
substitute
$$
u_{m,n}(t) = (-1)^n e^{i (\mu^2 - 4 \epsilon) t} \left[ \psi_m + v_{m,n}(t) \right],
$$
and obtain the linearized dNLS equation
\begin{eqnarray*}
\hspace{-0.4cm}
i \frac{d v_{m,n}}{d t} - \mu^2 v_{m,n} + \epsilon
(v_{m+1,n} + v_{m-1,n} - v_{m,n+1} - v_{m,n-1})   +
\psi_m^2 (2 v_{m,n} + \bar{v}_{m,n}) = 0.
\end{eqnarray*}
For normal modes (\ref{normal-mode}), we obtain the eigenvalue
problem
\begin{equation}
\label{lin-eigen2}
L_+(p) U = -\lambda W, \quad L_-(p) W = \lambda U,
\end{equation}
where
\begin{eqnarray}  \label{e:LpmX}
\begin{array}{l}
(L_+(p) U)_m \equiv - \epsilon
\left[U_{m+1} + U_{m-1} - 2 \cos(p) U_m\right]  + \mu^2 U_m - 3 \psi_m^2 U_m, \\
(L_-(p) W)_m \equiv - \epsilon
\left[W_{m+1} + W_{m-1} - 2 \cos(p) W_m\right]  + \mu^2 W_m - \psi_m^2 W_m. \end{array}
\end{eqnarray}

Again, we are interested in transverse stability of the fundamental
line soliton $\psi$ which is positive and given by the power series
expansion (\ref{expansion-1}) in the anti-continuum limit
$\epsilon\to 0$. As before, we will fix $\mu=1$ without loss of
generality. The next theorem guarantees stability of this
fundamental discrete line soliton for small values of $\epsilon$.

\begin{theorem}
\label{theorem-2} Consider the fundamental discrete line soliton
(\ref{s:xpoint}) bifurcating from the $X$ point in the dNLS equation
(\ref{dNLS}). There exists $\epsilon_0 > 0$ such that for any
$\epsilon \in (0,\epsilon_0)$ and $p \in [\pi,\pi]$, the
linear-stability problem (\ref{lin-eigen}) does not admit
any unstable eigenvalues, thus the fundamental line soliton for
small values of $\epsilon$ is transversely stable. This stable line
soliton possesses a pair of discrete imaginary eigenvalues $\pm i
\omega(\epsilon,p)$ of negative Krein signature. Moreover, for any
$p \in [-\pi,\pi]$ and small $\epsilon$, this eigenvalue
$\omega(\epsilon,p)$ has the following asymptotic expression,
\begin{equation}
\label{result-theorem-2}
\omega^2(\epsilon,p) = 8 \epsilon \sin^2\left(\frac{p}{2}\right) + \mathcal{O}(\epsilon^2) \hspace{0.2cm}
\mbox{\rm as} \hspace{0.2cm} \epsilon \to 0.
\end{equation}
\end{theorem}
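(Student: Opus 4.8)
The plan is to mirror the proof of Theorem \ref{theorem-1}, tracking a single but decisive sign change that flips the conclusion from instability to stability. First I would rewrite the operators (\ref{e:LpmX}) as
\begin{equation*}
L_{\pm}(p) = L_{\pm}(0) + 2\epsilon\left[\cos(p) - 1\right],
\end{equation*}
and observe that $L_{\pm}(0)$ here coincide with the operators $L_{\pm}(0)$ of the $\Gamma$-point analysis. Hence I may reuse the established spectral facts: since $\psi > 0$ solves $L_-(0)\psi = 0$, the operator $L_-(0)$ is non-negative with a simple zero eigenvalue at the bottom of its spectrum, while $L_+(0)$ has exactly one negative eigenvalue for every $\epsilon > 0$. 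The crucial contrast with the $\Gamma$-point case is that the transverse shift $2\epsilon[\cos(p)-1]$ is now non-positive rather than non-negative.

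Next I would count the negative eigenvalues for small $\epsilon > 0$ and $p \neq 0$. Because the shift is a negative multiple of the identity, it moves the simple zero eigenvalue of $L_-(0)$ down to $2\epsilon[\cos(p)-1] < 0$, while the rest of the spectrum of $L_-(0)$, separated from zero by an $\mathcal{O}(1)$ gap, stays positive for $\epsilon$ small. Thus $n(L_-(p)) = 1$, in sharp distinction to the $\Gamma$-point case where $n(L_-(p)) = 0$. The same $\mathcal{O}(\epsilon)$ shift leaves $L_+(p)$ with exactly one negative eigenvalue, so $n(L_+(p)) = 1$, and neither operator has a zero eigenvalue for $p \neq 0$. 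The eigenvalue-counting formula (\ref{count-1}) then applies and yields $N_{\rm real}^- + N_{\rm imag}^- + N_{\rm comp} = 1$ together with $N_{\rm real}^- = N_{\rm real}^+$. By itself this does not decide stability, since the single unit of budget could be spent on a real pair, a complex quadruplet, or an imaginary pair of negative Krein signature.

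The decisive step is the asymptotic computation of the eigenvalue bifurcating from the double zero at $p = 0$. Repeating the scaling $U = \sqrt{\epsilon}\,\mathcal{U}$, $W = \mathcal{W}$, $\lambda = \sqrt{\epsilon}\,\Lambda$ and the reduction $\mathcal{U}_{\pm m}, \mathcal{W}_{\pm m} = \mathcal{O}(\epsilon^m)$ from the proof of Theorem \ref{theorem-1}, I would close the leading-order system at the central site. The sign flip in the transverse coupling of (\ref{e:LpmX}) passes directly into the reduced $L_-$ equation, replacing the coefficient $2 - 2\cos(p)$ by $2\cos(p) - 2$, so that elimination now gives
\begin{equation*}
\Lambda^2 = 2\left[2\cos(p) - 2\right] + \mathcal{O}(\epsilon) = -8\sin^2\!\left(\frac{p}{2}\right) + \mathcal{O}(\epsilon).
\end{equation*}
Hence $\Lambda^2 < 0$ for $p \neq 0$, the double zero splits along the imaginary axis, and the bifurcating pair is $\pm i\,\omega(\epsilon,p)$ with $\omega^2 = 8\epsilon\sin^2(p/2) + \mathcal{O}(\epsilon^2)$, which is (\ref{result-theorem-2}).

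Finally I would close the argument with the counting formula. Since $\omega = \mathcal{O}(\sqrt{\epsilon})$, the bifurcating eigenvalue sits near the origin, far below the continuous bands near $\pm i$, which by the resolvent analysis of \cite{PelSak} remain on the imaginary axis with no eigenvalues detaching from $\pm i$. Thus for small $\epsilon$ the only discrete eigenvalue in play is the imaginary pair $\pm i\omega$, so no real or complex eigenvalues occur: $N_{\rm real}^- = N_{\rm comp} = 0$. Then (\ref{count-1}) forces $N_{\rm imag}^- = 1$, identifying $\pm i\omega$ as a mode of negative Krein signature and establishing transverse stability. I expect the main obstacle to be precisely this last isolation argument: an eigenvalue of negative Krein signature is exactly the kind of mode that can resonate with the positive-energy continuous spectrum and generate a Hamiltonian--Hopf instability (which is what happens on the stripe lattice of Section \ref{sec:1D}). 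What rescues stability here is the scale separation $\omega = \mathcal{O}(\sqrt{\epsilon}) \ll 1$, keeping the negative-energy eigenvalue isolated from the band for all small $\epsilon$; making this quantitative and uniform in $p$ (where the leading-order asymptotics degenerate as $p \to 0$) is the delicate part of the proof.
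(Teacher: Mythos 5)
Your ingredients are exactly those of the paper's proof: the rewriting $L_\pm(p) = L_\pm(0) - 2\epsilon\left[1-\cos(p)\right]$, the identification of $L_\pm(0)$ with the $\Gamma$-point operators, the counts $n(L_+(p)) = n(L_-(p)) = 1$ for small $\epsilon$ and $p\neq 0$, the counting formula (\ref{count-1}), and the central-site reduction giving $\Lambda^2 = -8\sin^2(p/2) + \mathcal{O}(\epsilon)$. The difference is in where the stability conclusion comes from, and your ordering leaves a gap that you flag but do not close. You make the asymptotic computation the ``decisive step'' that forces the splitting onto the imaginary axis, and then use (\ref{count-1}) only to read off the Krein signature. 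But the expansion $\Lambda^2 = -8\sin^2(p/2)+\mathcal{O}(\epsilon)$ determines the sign of $\Lambda^2$ only for fixed $p\neq 0$ as $\epsilon\to 0$; for fixed small $\epsilon$ and $p\to 0$ the leading term is swamped by the remainder, so this route does not by itself yield stability for all $p\in[-\pi,\pi]\backslash\{0\}$, which is what the theorem asserts.

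The paper closes exactly this hole with a parity argument for which you already have every piece. Subtracting the two identities in (\ref{count-1}) gives $N_{\rm real}^- - N_{\rm real}^+ = n(L_+(p)) - n(L_-(p)) = 0$. For small $\epsilon$ the only discrete eigenvalues are those born from the double zero at $p=0$ (nothing detaches from $\pm i$, by the resolvent analysis you cite), so there are at most two of them; by the quadruple/pair symmetry they form a pair $\pm\lambda$ with $\lambda^2$ real, hence either real or purely imaginary. A real pair would contribute exactly one real positive eigenvalue, i.e.\ an odd number to $N_{\rm real}^- + N_{\rm real}^+$, which is impossible when $N_{\rm real}^- = N_{\rm real}^+$; and $N_{\rm comp}=1$ would require four eigenvalues. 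Hence $N_{\rm imag}^- = 1$ for every $p\neq 0$ and every $\epsilon\in(0,\epsilon_0)$, uniformly. Your assertion that ``the single unit of budget could be spent on a real pair'' is precisely what this parity observation rules out. With that step inserted, the asymptotic computation plays the role it has in the paper: a derivation of the formula (\ref{result-theorem-2}), not the proof of stability. The rest of your proposal --- the downward shift of the simple zero eigenvalue of $L_-(0)$ giving $n(L_-(p))=1$, the scaling $U=\sqrt{\epsilon}\,\mathcal{U}$, $W=\mathcal{W}$, $\lambda=\sqrt{\epsilon}\,\Lambda$, and the decay reduction $\mathcal{U}_{\pm m},\mathcal{W}_{\pm m}=\mathcal{O}(\epsilon^m)$ --- matches the paper's proof.
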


\begin{proof}
We first rewrite operators $L_{\pm}(p)$ in (\ref{e:LpmX}) as
$$
L_{\pm}(p) = L_{\pm}(0) - 2 \epsilon \left[1 - \cos(p)\right].
$$
Because $L_-(0) \psi = 0$ and $\psi$ is positive, $0$ is the lowest
eigenvalue of $L_-(0)$ for any $\epsilon > 0$ \cite{Sturm1,Sturm2}.
By perturbation theory, $L_-(p)$ has exactly one negative eigenvalue for any $p
\in [-\pi, \pi] \backslash\{0\}$ and small positive $\epsilon$. On
the other hand, since $\psi$ and $L_+(0)$ for the line
soliton (\ref{s:xpoint}) are the same as those for the
line soliton (\ref{s:gamma}), the variational arguments from \cite{Hermann}
imply that $L_+(0)$ has exactly one
negative eigenvalue and no zero eigenvalue for any $\epsilon > 0$.
Therefore, $L_+(p)$ has a single negative eigenvalue for any $p \in
[-\pi, \pi]$ and small positive $\epsilon$.

For any $p \neq 0$, we again use the eigenvalue-counting formula
(\ref{count-1}), which equally applies to the linear eigenvalue
problem (\ref{lin-eigen2}). The preceding computation shows that
there is $\epsilon_0 > 0$ such that $n(L_-(p)) = 1$ and $n(L_+(p)) =
1$ for any $\epsilon \in (0,\epsilon_0)$ and $p \in [-\pi,\pi]
\backslash \{0\}$. Since eigenvalues in the spectral problem
(\ref{lin-eigen2}) appear as quadruples $(\lambda, \bar{\lambda},
-\lambda, -\bar{\lambda})$ for complex $\lambda^2$ and as pairs $\pm
\lambda$ for real $\lambda^2$ and since the zero eigenvalue for $p =
0$ has algebraic multiplicity two, this zero eigenvalue splits along
the real or imaginary axis as a pair of simple eigenvalues for $p
\neq 0$. Combining this with the eigenvalue-counting formula
(\ref{count-1}), we easily see that this splitting occurs along the
imaginary axis, and for any $\epsilon \in (0,\epsilon_0)$ and $p \in
[-\pi,\pi] \backslash \{0\}$,
$$
N_{\rm imag}^- = 1, \quad N_{\rm real}^+ = N_{\rm real}^- = N_{\rm comp} = 0,
$$
which proves the transverse-stability statement in Theorem 2. Note
that the imaginary eigenvalues of negative Krein signature persist
on the imaginary axis, unless they coalesce with other eigenvalues
of positive Krein signature or continuous spectral bands.

Next we prove the asymptotic expansion (\ref{result-theorem-2}) for
the imaginary eigenvalue $i \omega(\epsilon,p)$ as $\epsilon \to 0$.
When $\epsilon = 0$, the spectral problem (\ref{lin-eigen}) with
$\mu = 1$ has three points in the spectrum: $\lambda = 0$ of
algebraic multiplicity two and $\lambda = \pm i$ of infinite
algebraic multiplicity. For small $\epsilon$, we only need to
compute the splitting of the double zero eigenvalue for any fixed $p
\in [-\pi,\pi]$, using the expansion (\ref{expansion-1}) near the
anti-continuum limit.

Repeating the perturbation expansions and using the scaling
transformation $U = \sqrt{\epsilon} \hspace{0.06cm} \mathcal{U}$, $W
= \mathcal{W}$, and $\lambda = \sqrt{\epsilon}\hspace{0.06cm}
\Lambda$, we obtain the linear eigenvalue problem at the central
site $m = 0$:
\begin{eqnarray*}
\hspace{-0.7cm}
2 \mathcal{U}_0 + \epsilon \left[\mathcal{U}_{1} + \mathcal{U}_{-1} - 2 \cos(p) \mathcal{U}_0 + 6 \mathcal{U}_0 \right] +
\mathcal{O}(\epsilon^2) \mathcal{U}_0  =  \Lambda \mathcal{W}_0, \\
\hspace{-0.7cm}
\mathcal{W}_{1} + \mathcal{W}_{-1} - 2 \cos(p) \mathcal{W}_0 + 2 \mathcal{W}_0
+ \mathcal{O}(\epsilon) \mathcal{W}_0  =  - \Lambda \mathcal{U}_0.
\end{eqnarray*}
Similar to the proof of Theorem \ref{theorem-1}, for $\Lambda =
\mathcal{O}(1)$, we have the reduction $\mathcal{U}_{\pm m} =
\mathcal{O}(\epsilon^m) \mathcal{U}_0$ and $\mathcal{W}_{\pm m} =
\mathcal{O}(\epsilon^m) \mathcal{W}_0$ for any $m \in \mathbb{N}$,
hence the above equations yield
\begin{equation*}
\Lambda^2 = -2 \left[2 - 2 \cos(p)\right] + \mathcal{O}(\epsilon) = -8 \sin^2\left(\frac{p}{2}\right) + \mathcal{O}(\epsilon),
\end{equation*}
which yields the asymptotic expansion (\ref{result-theorem-2}).
\end{proof}

\begin{remark}
For any $p \in [-\pi,\pi] \backslash \{0\}$ and small values of
$\epsilon$, we get $\Lambda^2 < 0$, which gives imaginary
eigenvalues $\pm i \omega(\epsilon,p)$. It is easy to see that these
imaginary eigenvalues have negative Krein signature, meaning that
the quadratic form $\langle L_+(p) U, U \rangle$ at the eigenvector
$(U,W)$ is negative. For small values of $\epsilon$, these imaginary
eigenvalues are bounded away from the continuous spectrum
bifurcating out of the points $\pm i$, which guarantees spectral
stability of discrete line solitons for small positive $\epsilon$.
\end{remark}

\begin{remark}
Using the resolvent analysis from \cite{PelSak}, one can show that
the continuous spectral bands are located at the two segments on the
imaginary axis:
\begin{eqnarray*} \label{bandX}
i \lambda \in [-1-2 \epsilon (1+\cos(p)),-1+2\epsilon (1 - \cos(p))] \,  \cup  \,
[1-2\epsilon (1 - \cos(p)),1+2 \epsilon (1+\cos(p))],
\end{eqnarray*}
and no discrete (isolated) eigenvalues bifurcate out from the points
$\lambda = \pm i$ as $\epsilon \to 0$.
\end{remark}

\begin{remark}
In the continuum limit $\epsilon \to +\infty$, discrete line
solitons (\ref{s:xpoint}) from the X point in Eq. (\ref{dNLS}) are
asymptotically described by the hyperbolic NLS equation
(\ref{hyperbolic-NLS}), where line solitons are transversely
unstable for any nonzero transverse wave number $p$
\cite{DecPel,Yang_SIAM}. Hence unstable eigenvalues must appear for
these discrete line solitons at sufficiently large positive
$\epsilon$. These unstable eigenvalues can appear through collisions
of imaginary eigenvalues of negative Krein signature with the
continuous spectral band or with additional imaginary eigenvalues of
positive Krein signature.
\end{remark}

\section{One-dimensional (stripe) lattice} \label{sec:1D}

In this section, we consider transverse stability of line solitons
in the dNLS equation on a one-dimensional lattice with continuous
transverse dispersion. The mathematical model for this problem is
\begin{equation}
\label{dNLS-diff}
i \frac{\partial u_{m}}{\partial t} + \epsilon (u_{m+1} + u_{m-1} - 2 u_{m})
+ \kappa \frac{\partial^2 u_m}{\partial y^2} + |u_{m}|^2 u_{m} = 0,
\end{equation}
where $m \in \mathbb{Z}$, the complex variable $u_{m}$ depends on
the evolution time $t$ and the transverse coordinate $y$. Here the
sign of nonlinearity has been normalized to be unity through a
scaling of $\epsilon$, $\kappa$ and $t$. By the staggering
transformation
\begin{equation}
u_{m}(y,t) = (-1)^{m} v_{m}(y,t) e^{-4 i \epsilon t},
\end{equation}
we can map the dNLS equation (\ref{dNLS-diff}) for $u$ with
$\epsilon < 0$ to the same equation for $v$ with $\epsilon > 0$.
Thus we set $\epsilon>0$ below but consider both positive and
negative values of the transverse dispersion parameter $\kappa$.
Through a scaling of $y$, we normalize $\kappa$ so that $\kappa=\pm
1$.

Transverse instability of line solitons was reported in \cite{Yang2}
for $\kappa = -1$ and any $\epsilon > 0$. We shall prove this
numerical observation by rigorous study of spectral stability. We
shall also study the case $\kappa = +1$ for completeness.

First, by inserting the discrete Fourier modes $u_{m}(y,t) = e^{i k
m  - i \omega t}$ into the linear dNLS equation (\ref{dNLS-diff}),
we find that the discrete-dispersion relation is
$$
\omega(k) = 2 \epsilon \left[1 -\cos(k)\right],
$$
where the wavenumber $k$ is in the first Brillouin zone $k\in [-\pi,
\pi]$. For $\epsilon > 0$, discrete line solitons bifurcate from the
minimum of this dispersion curve towards negative values of
$\omega$. Therefore the discrete line solitons are of the form
\begin{equation} \label{s:1D}
u_{m}(y,t) = e^{i \mu^2 t} \psi_m,
\end{equation}
where $\psi$ is a real-valued solution of the stationary 1D dNLS
equation (\ref{1D-DNLS}). Linearizing around this solution, we substitute
$$
u_{m}(y,t) = e^{i \mu^2 t} \left[ \psi_m + v_{m}(y,t) \right]
$$
into the dNLS equation (\ref{dNLS-diff}) and obtain the linearized dNLS equation
\begin{equation*}
i \frac{\partial v_{m}}{\partial t} - \mu^2 v_{m} + \epsilon
(v_{m+1} + v_{m-1} - 2 v_{m}) + \kappa \frac{\partial^2 v_m}{\partial y^2} +
\psi_m^2 (2 v_{m} + \bar{v}_{m}) = 0.
\end{equation*}
For the normal mode
\begin{equation*}
v_{m}(y,t) = e^{\lambda t + i p y} \left( U_{m} + i W_{m} \right),
\hspace{0.2cm} \bar{v}_{m}(y,t) = e^{\lambda t + i p y} \left( U_{m} - i W_{m} \right),
\end{equation*}
we obtain the linear-stability eigenvalue problem
\begin{equation} \label{lin-eigen3}
L_+(p) U = -\lambda W, \quad L_-(p) W = \lambda U,
\end{equation}
where
\begin{eqnarray} \label{e:Lpm3}
\begin{array}{l} (L_+(p) U)_m \equiv
- \epsilon
(U_{m+1} + U_{m-1} - 2 U_{m})  + (\mu^2 + \kappa p^2) U_m - 3 \psi_m^2 U_m, \\
(L_-(p) W)_m \equiv - \epsilon
(W_{m+1} + W_{m-1} - 2 W_{m})  + (\mu^2 + \kappa p^2) W_m - \psi_m^2 W_m. \end{array}
\end{eqnarray}
As before, we set $\mu = 1$ by variable rescaling and consider the fundamental
line soliton represented by the power series expansion (\ref{expansion-1}) for small $\epsilon$.

When $\epsilon=0$, the eigenvalue problem (\ref{lin-eigen3}) with
$\mu = 1$ has four points in the spectrum: two simple eigenvalues at
$\lambda = \pm \sqrt{\kappa p^2 (2 - \kappa p^2)}$ and two other
eigenvalues of infinite algebraic multiplicities at $\lambda=\pm
i(1+\kappa p^2)$.

If $\kappa =1$, the simple eigenvalues $\lambda=\pm p \sqrt{2 -
p^2}$ are real for $0<p^2<2$, thus the discrete line soliton
(\ref{s:1D}) is transversely unstable even in the unperturbed
($\epsilon=0$) case. These real eigenvalues persist for small
$\epsilon$. The following theorem shows that the transverse
instability of discrete line solitons (\ref{s:1D}) with $\kappa = 1$
holds for any $\epsilon> 0$.

\begin{theorem}
\label{theorem-3} Consider the fundamental discrete line soliton
(\ref{s:1D}) in the dNLS equation (\ref{dNLS-diff}) with $\kappa=1$.
For any $\epsilon > 0$, there is $p_0(\epsilon) >0$ such that for
any $p \in (-p_0(\epsilon),p_0(\epsilon)) \backslash\{0\}$ the
linear-stability problem (\ref{lin-eigen3}) admits a pair
of real eigenvalues $\pm \lambda(\epsilon,p)$ with
$\lambda(\epsilon,p)> 0$, thus this line soliton is transversely
unstable. In addition, for small $\epsilon$, $p_0(\epsilon)$ and
$\lambda(\epsilon,p)$ are given asymptotically by
\begin{equation}
\label{result-theorem-3}
p_0(\epsilon) = \sqrt{2} + \mathcal{O}(\epsilon), \hspace{0.15cm}
\lambda(\epsilon,p) = p \sqrt{2 - p^2} + \mathcal{O}(\epsilon) \hspace{0.15cm} \mbox{\rm as} \hspace{0.15cm}
\epsilon \to 0.
\end{equation}
\end{theorem}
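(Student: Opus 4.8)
The plan is to reproduce, for $\kappa=1$, the negative-index argument used in Theorem~\ref{theorem-1}, taking advantage of the fact that the transverse dispersion now enters the operators $L_\pm(p)$ of (\ref{e:Lpm3}) as a rigid positive shift. Fixing $\mu=1$, I would first write
\[
L_\pm(p) = L_\pm(0) + p^2,
\]
so that $L_\pm(0)$ are precisely the $p$-independent operators already studied in the proofs of Theorems~\ref{theorem-1} and~\ref{theorem-2}. Since $L_-(0)\psi=0$ with $\psi$ positive, $0$ sits at the bottom of the spectrum of $L_-(0)$, whence $L_-(0)\ge 0$ and $L_-(p)=L_-(0)+p^2>0$ for every $p\neq 0$; thus $n(L_-(p))=0$. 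By the variational argument of \cite{Hermann}, $L_+(0)$ possesses exactly one negative eigenvalue $\nu_0(\epsilon)<0$ for every $\epsilon>0$, with $\nu_0(\epsilon)\to -2$ as $\epsilon\to 0$. Consequently $L_+(p)=L_+(0)+p^2$ has a single negative eigenvalue exactly when $p^2<-\nu_0(\epsilon)$, which I would take as the definition of the threshold $p_0(\epsilon):=\sqrt{-\nu_0(\epsilon)}$; for $0<|p|<p_0(\epsilon)$ one then has $n(L_+(p))=1$ and no kernel.

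With these two counts in place, I would apply the eigenvalue-counting identity (\ref{count-1}), which transfers verbatim to the problem (\ref{lin-eigen3}) because $L_\pm(p)$ have trivial kernel on $0<|p|<p_0(\epsilon)$. The second relation forces $N_{\rm real}^+=N_{\rm imag}^-=N_{\rm comp}=0$, and the first then yields $N_{\rm real}^-=1$. Hence for every $\epsilon>0$ and every $p\in(-p_0(\epsilon),p_0(\epsilon))\backslash\{0\}$ there is exactly one real positive eigenvalue $\lambda(\epsilon,p)$, paired with $-\lambda(\epsilon,p)$, which establishes transverse instability for all $\epsilon>0$.

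For the asymptotics, I would emphasize that, unlike in Theorems~\ref{theorem-1} and~\ref{theorem-2}, the unstable eigenvalue is not born from the double zero but is already present at $\epsilon=0$. Inserting the leading profile $\psi_m=\delta_{m,0}$ of (\ref{expansion-1}) into (\ref{lin-eigen3}) decouples the lattice: the central site $m=0$ gives $(p^2-2)U_0=-\lambda W_0$ and $p^2 W_0=\lambda U_0$, so $\lambda^2=p^2(2-p^2)$ and $\lambda_0=p\sqrt{2-p^2}$ is a simple eigenvalue whose eigenvector is supported at $m=0$, while every remaining site contributes only the purely imaginary pair $\pm i(1+p^2)$. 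For $0<p<\sqrt{2}$ this real eigenvalue is simple and isolated from the imaginary continuous spectrum, so ordinary (regular, non-degenerate) analytic perturbation theory in $\epsilon$ applies directly and delivers $\lambda(\epsilon,p)=p\sqrt{2-p^2}+\mathcal{O}(\epsilon)$. Similarly, analyticity of the simple eigenvalue $\nu_0(\epsilon)$ of $L_+(0)$ gives $\nu_0(\epsilon)=-2+\mathcal{O}(\epsilon)$, so that $p_0(\epsilon)=\sqrt{-\nu_0(\epsilon)}=\sqrt{2}+\mathcal{O}(\epsilon)$, which is (\ref{result-theorem-3}).

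I do not expect a substantial obstacle in this case: because the unstable mode is a simple, isolated, \emph{real} eigenvalue already at the anti-continuum point, the singular $\sqrt{\epsilon}$ rescaling required in Theorems~\ref{theorem-1} and~\ref{theorem-2} is avoided, and the instability is an immediate corollary of the index count. The only points demanding care are the appeal to \cite{Hermann} for the sharp statement $n(L_+(0))=1$ at \emph{every} $\epsilon>0$ (not merely for small $\epsilon$), and the verification that $L_\pm(p)$ remain invertible throughout $0<|p|<p_0(\epsilon)$, so that (\ref{count-1}) is legitimately applied up to, but not including, the threshold at which $\lambda(\epsilon,p)$ collapses to zero.
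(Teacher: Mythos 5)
Your proposal is correct and follows essentially the same route as the paper: the decomposition $L_\pm(p)=L_\pm(0)+p^2$, the counts $n(L_-(p))=0$ and $n(L_+(p))=1$ for $0<|p|<p_0(\epsilon):=\sqrt{-\nu_0(\epsilon)}$, the index formula (\ref{count-1}) giving $N_{\rm real}^-=1$, and the asymptotics from the explicit $\epsilon=0$ spectrum plus analyticity in $\epsilon$. The only difference is cosmetic: you spell out the decoupled $\epsilon=0$ computation and the regular-perturbation step for the simple isolated real eigenvalue, which the paper compresses into a one-sentence appeal to "the preceding computations and analyticity."
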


\begin{proof}
We first rewrite operators $L_{\pm}(p)$ in (\ref{e:Lpm3}) as
$$
L_{\pm}(p) = L_{\pm}(0) + p^2.
$$
Similar to the proof of Theorem \ref{theorem-1}, we can see that
$L_-(p)$ is strictly positive for any $p \ne 0$ and $\epsilon \geq
0$. On the other hand, for any $\epsilon \geq 0$, $L_+(0)$ has a
single negative eigenvalue $-\beta$, where $\beta>0$. In particular,
when $\epsilon=0$, the negative eigenvalue with $\beta=2$ is
associated with the central site $m = 0$. Thus by denoting
$p_0(\epsilon)\equiv \sqrt{\beta}$, we see that $L_+(p)$ has exactly
one negative eigenvalue for $p \in (-p_0(\epsilon),p_0(\epsilon))$
and is strictly positive for any $|p|> p_0(\epsilon)$. The
eigenvalue-counting formula (\ref{count-1}) then yields that when $p
\in (-p_0(\epsilon),p_0(\epsilon))\backslash\{0\}$,
$$
N_{\rm real}^- = 1, \quad N_{\rm real}^+ = N_{\rm imag}^- = N_{\rm comp} = 0.
$$
The asymptotic expansions (\ref{result-theorem-3}) directly follow
from the preceding computations at $\epsilon = 0$ and the
analyticity of the linear eigenvalue problem (\ref{lin-eigen3}) in
$\epsilon$.
\end{proof}

\begin{remark}
The asymptotic expansion $\lambda(\epsilon,p) = p \sqrt{2 - p^2} +
\mathcal{O}(\epsilon)$ works equally well for $|p| > p_0(\epsilon)$,
where this $\lambda(\epsilon,p)$ is purely imaginary. These
imaginary eigenvalues have positive Krein signature and are bounded
away from the continuous spectrum located at
$$
i \lambda \in [-(1+p^2+4 \epsilon), -(1+p^2)] \cup [1+p^2,1+p^2+4\epsilon].
$$
\end{remark}

If $\kappa =-1$ and $\epsilon=0$, the simple eigenvalues $\lambda =
\pm i p \sqrt{2 + p^2}$ are purely imaginary, so are the eigenvalues
$\lambda=\pm i(1 - p^2)$ of infinite algebraic multiplicities. These
eigenvalue branches intersect at $p = \pm p_c$, where $p_c =
\frac{1}{2}$. When $p \neq \pm p_c$ and $0<\epsilon\ll 1$, the
simple eigenvalues persist on $i \mathbb{R}$, whereas two continuous
spectral bands bifurcate from the non-simple eigenvalues
$\lambda=\pm i(1 - p^2)$ along the two segments on $i \mathbb{R}$ as
$$
i \lambda \in [1-p^2, 1-p^2+4\epsilon] \cup [-(1-p^2+4\epsilon), -(1-p^2)].
$$
However, when $p = \pm p_c$ and $0<\epsilon\ll 1$, a resonance
occurs between these simple and non-simple eigenvalues, and as a
consequence, complex (unstable) eigenvalues bifurcate out. Notice
that the simple eigenvalues $\lambda = \pm i p \sqrt{2 + p^2}$ have
negative Krein signature, whereas the non-simple eigenvalues
$\lambda=\pm i(1 - p^2)$ have positive Krein signature. This
bifurcation of complex eigenvalues due to collision of eigenvalues
with opposite Krein signatures is a common phenomenon in Hamiltonian
systems \cite{KKS,VP}.

The following theorem guarantees instability of discrete line
solitons (\ref{s:1D}) in the dNLS equation (\ref{dNLS-diff}) with
$\kappa=-1$ for small values of $\epsilon
> 0$. This instability is caused by complex eigenvalues with small
real parts, and it occurs for intermediate values of transverse
wavenumbers $p$.

\begin{theorem}
\label{theorem-4} Consider the fundamental discrete line soliton
(\ref{s:1D}) in the dNLS equation (\ref{dNLS-diff}) with
$\kappa=-1$. There exists $\epsilon_0 > 0$ such that for any
$\epsilon \in (0,\epsilon_0)$, there exist $p_c^{\pm}(\epsilon)$
with ordering $0 < p_c^-(\epsilon) < p_c^+(\epsilon) < +\infty$, so
that for any $|p| \in (p_c^-(\epsilon),p_c^+(\epsilon))$ the
linear-stability problem (\ref{lin-eigen3}) admits a
quartet of complex eigenvalues $\pm \lambda(\epsilon,p)$, $\pm
\bar{\lambda}(\epsilon,p)$ with ${\rm Re} \lambda(\epsilon,p) > 0$
and ${\rm Im} \lambda(\epsilon,p)
> 0$. In addition, when $\epsilon \to 0$,
$p_c^{\pm}(\epsilon)$ and $\lambda(\epsilon,p)$ are given
asymptotically by
\begin{equation}
\label{result-theorem-5}
p_c^{\pm}(\epsilon) = \frac{1}{2} + \frac{\epsilon}{2} \left(1 \pm \frac{\sqrt{15}}{2}\right) +
\mathcal{O}(\epsilon^2),
\end{equation}
and
\begin{equation}
\label{result-theorem-4}
\lambda(\epsilon,p) = \frac{3}{4}i + \frac{i \epsilon}{15}(14 + 17 \delta)
+ \frac{2 \epsilon}{15} \sqrt{15 - 4(1-2\delta)^2} + \mathcal{O}(\epsilon^2),
\end{equation}
where $\delta \equiv \epsilon^{-1}(p^2 - \frac{1}{4}) =
\mathcal{O}(1)$. Furthermore, the most unstable eigenvalue
$\lambda_{max}(\epsilon)$ occurs at the transverse wavenumbers $\pm
p_{max}(\epsilon)$, where $\lambda_{max}(\epsilon)$ and
$p_{max}(\epsilon)$ are given by
\begin{equation}  \label{e:lammax}
\lambda_{max}(\epsilon)=\frac{3}{4} i + \epsilon \left(\frac{2}{\sqrt{15}} +\frac{3}{2}i\right)
+ \mathcal{O}(\epsilon^2), \hspace{0.2cm}
p_{max}=\frac{1}{2} + \frac{1}{2} \epsilon+
\mathcal{O}(\epsilon^2).
\end{equation}
\end{theorem}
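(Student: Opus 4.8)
The plan is to analyze the resonance at $p = p_c = \tfrac12$ in the anti-continuum limit, where the simple eigenvalue branch $\lambda = ip\sqrt{2+p^2}$ (negative Krein signature) collides with the branch $\lambda = i(1-p^2)$ (positive Krein signature) that seeds the continuous spectral band. The core of the argument is a local reduction near this collision: I would set $p^2 = \tfrac14 + \epsilon\delta$ with $\delta = \mathcal{O}(1)$ and $\lambda = \tfrac34 i + \epsilon\Lambda$, and expand the eigenvalue problem \eqref{lin-eigen3} to obtain a finite-dimensional reduced problem whose complex roots $\Lambda$ govern the instability.

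First I would use the expansion \eqref{expansion-1} for $\psi$ and the operator splitting $L_\pm(p) = L_\pm(0) - p^2$ (the sign from $\kappa = -1$) to set up the perturbed eigenvalue problem. At $\epsilon = 0$ the isolated eigenvector is the single-site mode at $m=0$; the continuous spectrum comes from the off-site lattice $m \neq 0$. The resonance means that at leading order both the discrete mode and a Fourier-continuum mode share the frequency $\tfrac34 i$, so the reduction must couple the amplitude $(\mathcal{U}_0, \mathcal{W}_0)$ of the central site to the amplitude of the resonant continuous-spectrum wave. I would carry out a Lyapunov–Schmidt decomposition, projecting onto the two-dimensional resonant subspace and solving for the tail on the complementary subspace via the resolvent of the unperturbed off-site operator evaluated just off its spectral edge. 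The coupling coefficient between the isolated mode and the band edge — essentially the overlap of the central-site perturbation with the resonant Bloch wave — is what produces the nonzero real part of $\Lambda$.

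Expanding the resulting $2\times 2$ (or reduced scalar) characteristic equation in $\delta$ and matching orders in $\epsilon$ should yield a quadratic in $\Lambda$ whose discriminant is $15 - 4(1-2\delta)^2$. A genuine complex quartet exists precisely when this discriminant is positive, i.e. $|1-2\delta| < \tfrac{\sqrt{15}}{2}$, which translates into the interval $p \in (p_c^-(\epsilon), p_c^+(\epsilon))$ with the band edges \eqref{result-theorem-5}. Reading off the real and imaginary parts of $\Lambda$ then gives \eqref{result-theorem-4}. The most unstable eigenvalue \eqref{e:lammax} follows by maximizing $\mathrm{Re}\,\lambda$ over $\delta$; the maximum of $\sqrt{15 - 4(1-2\delta)^2}$ occurs at $\delta = \tfrac12$, giving $p_{max}^2 = \tfrac14 + \tfrac{\epsilon}{2}$ and hence $p_{max} = \tfrac12 + \tfrac{\epsilon}{2} + \mathcal{O}(\epsilon^2)$, while substituting $\delta = \tfrac12$ into \eqref{result-theorem-4} produces $\mathrm{Re}\,\lambda_{max} = \tfrac{2\epsilon}{\sqrt{15}}$.

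The main obstacle will be the resolvent estimate for the continuous-spectrum tail at the resonant frequency. Unlike the $\Gamma$- and $X$-point cases (Theorems \ref{theorem-1} and \ref{theorem-2}), where the double zero eigenvalue splits in isolation and the tails decay as $\mathcal{O}(\epsilon^m)$ away from the central site, here the relevant spectral parameter sits \emph{at the edge} of a continuous band that is itself moving with $\epsilon$, so a naive Neumann-series inversion fails. I expect to need the detailed resolvent analysis of \cite{PelSak} to control the off-site operator uniformly near its band edge and to extract the precise coupling constant; getting the numerical coefficients $14$, $17$, and the factor $\tfrac{2}{15}$ in \eqref{result-theorem-4} correct hinges on tracking the $\mathcal{O}(\epsilon)$ corrections in this resolvent expansion rather than merely its leading behavior.
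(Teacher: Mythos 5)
Your proposal is correct in substance and follows essentially the same route as the paper: the same rescaling $p^2=\frac14+\epsilon\delta$, $\lambda=\frac34 i+\mathcal{O}(\epsilon)$ near the collision of the negative-signature branch $ip\sqrt{2+p^2}$ with the positive-signature branch $i(1-p^2)$, reduction to a quadratic whose discriminant is $15-4(1-2\delta)^2$, and maximization at $\delta=\frac12$. Two points of comparison are worth recording. First, where you propose a Lyapunov--Schmidt reduction with resolvent control of the off-site tail near a moving band edge (and anticipate needing the machinery of \cite{PelSak}), the paper handles the tail completely explicitly: after the substitution $U_m=\frac{a_m+b_m}{2}$, $W_m=\frac{a_m-b_m}{2i}$, the $b$-component decays as $\mathcal{O}(\epsilon^{|m|})$ while the resonant $a$-component is solved exactly as $a_m=a_0e^{-\rho|m|}$ with $\gamma+\delta=2-2\cosh\rho$, and closing the equation at $m=0$ gives the quadratic $3z^2+4(2\delta-1)z+5=0$ for $z=e^{\rho}$. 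Since the product of its roots is $5/3$, the complex roots satisfy $|z|=\sqrt{5/3}$, so ${\rm Re}\,\rho$ is bounded away from zero uniformly over the instability window; the delicate band-edge resolvent estimate you flag as the main obstacle is therefore not needed -- the explicit exponential ansatz (which is just the discrete Green's function you would obtain from the resolvent) already does the job elementarily. Second, your local analysis by itself only produces the quartet; the paper additionally invokes the negative-index count (\ref{count-1}), with $n(L_+(p))=n(L_-(p))=1$ for $|p|<p_0(\epsilon)$, to conclude $N_{\rm real}^{\pm}=0$ and $N_{\rm imag}^-+N_{\rm comp}=1$, which rules out real eigenvalues, shows there is exactly one quartet, and establishes spectral stability outside $(p_c^-,p_c^+)$. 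You should add that counting step (or an equivalent persistence argument) to turn the formal reduction into a complete proof of the theorem as stated.
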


\begin{proof}
Modifying the arguments from the proof of Theorem \ref{theorem-3}, we have now
$$
L_{\pm}(p) = L_{\pm}(0) - p^2.
$$
Therefore, for sufficiently small $\epsilon$, there is $p_0(\epsilon) > 0$ such that
the operators $L_{\pm}(p)$ have exactly one negative eigenvalue for all
$p \in (-p_0(\epsilon),p_0(\epsilon)) \backslash\{0\}$. Note that
$p_0(\epsilon) = 1 + \mathcal{O}(\epsilon)$ as $\epsilon \to 0$.

The eigenvalue-counting formula (\ref{count-1}) yields now
$$
N_{\rm imag}^- + N_{\rm comp} = 1, \hspace{0.2cm} N_{\rm real}^+ = N_{\rm real}^- = 0,
\hspace{0.2cm}   p \in (-p_0(\epsilon),p_0(\epsilon)\backslash\{0\}.
$$
The preceding computations and the analyticity of the linear
eigenvalue problem (\ref{lin-eigen3}) in $\epsilon$ imply that for
sufficiently small $\epsilon$, there are $p_c^{\pm}(\epsilon)$ with
ordering $0 < p_c^-(\epsilon) < p_c^+(\epsilon) < p_0(\epsilon)$
such that
$$
N_{\rm imag}^- = 1, \hspace{0.15cm} N_{\rm comp} = 0, \hspace{0.2cm} \mbox{for} \hspace{0.2cm} |p| \in (0,p_c^-(\epsilon)) \ \mbox{and} \ (p_c^+(\epsilon),p_0(\epsilon)),
$$
where $p_c^{\pm}(\epsilon) = \frac{1}{2} + \mathcal{O}(\epsilon)$ as
$\epsilon \to 0$. For these values of $\epsilon$ and $p$, the
discrete line solitons are spectrally stable. It remains to show
that
$$
N_{\rm imag}^- = 0 \hspace{0.2cm} \mbox{\rm and} \hspace{0.2cm}   N_{\rm comp} = 1 \quad \mbox{\rm for}
\quad |p| \in (p_c^-(\epsilon),p_c^+(\epsilon))
$$
due to a resonance between eigenvalues of negative and positive
Krein signatures.

First we introduce a scaling transformation
$$
p^2 = \frac{1}{4} + \epsilon \delta, \hspace{0.2cm} \lambda = \frac{3 i}{4} + i \epsilon \gamma, \hspace{0.2cm}
U_m = \frac{a_m + b_m}{2}, \hspace{0.2cm}  W_m = \frac{a_m - b_m}{2i},
$$
where $\delta, \gamma = \mathcal{O}(1)$. Under this transformation,
the eigenvalue problem (\ref{lin-eigen3}) for $\mu = 1$ and $\kappa
= -1$ becomes
\begin{eqnarray}
-\epsilon ( a_{m+1} - 2a_m + a_{m-1}) - (1+2\epsilon) \delta_{m,0} (2 a_0 + b_0)  +
\mathcal{O}(\epsilon^2) (2a_m + b_m) =  \epsilon ( \gamma + \delta ) a_m,   \hspace{0.5cm} \label{eq-1} \\
-\epsilon ( b_{m+1} - 2b_m + b_{m-1}) - (1+2\epsilon) \delta_{m,0} (a_0 + 2 b_0)  +
\mathcal{O}(\epsilon^2) (a_m + 2 b_m)
 =  -\left(\frac{3}{2} + \epsilon \gamma - \epsilon \delta \right) b_m. \label{eq-2}
\end{eqnarray}
From the second equation (\ref{eq-2}), we obtain
\begin{eqnarray}  \label{e:b0}
b_0  =  -2(1 - 2\epsilon + 2 \epsilon \gamma - 2 \epsilon \delta + \mathcal{O}(\epsilon^2)) a_0,
\end{eqnarray}
whereas $b_{\pm m} = \mathcal{O}(\epsilon^m) b_0$ for any $m \in
\mathbb{N}$. The first equation (\ref{eq-1}) for any $m \neq 0$
produces the second-order difference equation
$$
-(a_{m+1} - 2a_m + a_{m-1}) + \mathcal{O}(\epsilon) a_m = (\gamma + \delta) a_m, \quad m \in \mathbb{Z} \backslash \{0\},
$$
which admits a unique decaying solution for both $m \to \infty$ and
$m \to -\infty$:
$$
a_m = a_0 e^{-\rho |m|}, \quad m \in \mathbb{Z} \backslash \{0\},
$$
where $\rho$ is a unique root of the transcendent equation
\begin{equation}
\label{transcendent-equation}
\gamma + \delta = 2 - 2 \cosh(\rho), \quad {\rm Re}(\rho) > 0.
\end{equation}
To obtain the value for $\rho$ we close the first equation
(\ref{eq-1}) at $m = 0$:
$$
-2 \epsilon (e^{-\rho} - 1) a_0 - (1+2\epsilon + \mathcal{O}(\epsilon^2)) (2a_0 + b_0) = \epsilon (\gamma + \delta) a_0.
$$
Utilizing (\ref{e:b0}), this equation becomes
$$
-2(e^{-\rho} - 1) - 4(1 - \gamma + \delta) + \mathcal{O}(\epsilon) = \gamma + \delta.
$$
Substituting (\ref{transcendent-equation}) and neglecting the
$\mathcal{O}(\epsilon)$ term, we convert this equation to a
quadratic equation for $z = e^{\rho}$:
$$
3 z^2 + 4(2 \delta - 1) z + 5 = 0,
$$
which admits two possible solutions
$$
z = \frac{2 (1-2 \delta) \pm i \sqrt{15 - 4(1-2\delta)^2}}{3}.
$$
With the help of (\ref{transcendent-equation}), these solutions
produce expressions for $\gamma$ as
$$
\gamma = \frac{14 + 17 \delta \mp i \sqrt{15 - 4(1-2\delta)^2}}{15},
$$
which are complex-valued if $(1 - 2\delta)^2 < \frac{15}{4}$. These
expressions yield the asymptotic approximations
(\ref{result-theorem-5}) and (\ref{result-theorem-4}). From
(\ref{result-theorem-4}), we see that the most unstable eigenvalue
occurs at $\delta=\frac{1}{2}$, which yields the asymptotic approximation
(\ref{e:lammax}).
\end{proof}

\section{Numerical results} \label{sec:num}

In this section, we present numerical results on
transverse-stability eigenvalues of discrete line solitons in one-
and two-dimensional lattices for various values of lattice coupling
parameter $\epsilon$ (with fixed $\mu=1$). These numerical results
are shown to be in good agreement with the analytical results both
qualitatively and quantitatively.

\subsection{Numerical results for the dNLS equation on a two-dimensional lattice}

First we consider discrete line solitons (\ref{s:gamma}) bifurcating
from the $\Gamma$ point in the dNLS equation (\ref{dNLS}). At three
values of $\epsilon$, eigenvalues of the spectral stability problem
(\ref{lin-eigen}) for various transverse wavenumbers $p$ in the
interval $[0, \pi]$ are presented in Fig. \ref{f:fig1} (eigenvalues
for negative $p$ are the same as those for positive $p$). We see
that when $\epsilon=0.1$, a single pair of real eigenvalues exist
for all values of $p$ in $(0, \pi]$, in agreement with Theorem
\ref{theorem-1}. These real eigenvalues closely match the asymptotic
formula (\ref{result-theorem-1}) in Theorem \ref{theorem-1} (middle
left panel). When $\epsilon=1$, this pair of real eigenvalues exist
only in the interval of $0< p < p_0$, where $p_0\approx 2.51$. For
$p>p_0$, these real eigenvalues become purely imaginary. When
$\epsilon=4$, the $p$-interval of real eigenvalues further shrinks
to $(0, p_0)$ with $p_0 \approx 0.91$. Meanwhile, an additional pair
of imaginary discrete eigenvalues appear for all values of $p$ in
$[0, \pi]$. When $\epsilon \to +\infty$, the discrete line soliton
$\psi_m$ approaches the slowly-varying function
(\ref{line-soliton-cont}) with $\mu = 1$, and the interval of real
eigenvalues shrinks to $(0, p_0)$ with $p_0(\epsilon) \to
\sqrt{3/\epsilon}$, according to the elliptic 2D NLS equation
(\ref{elliptic-NLS}) (see \cite{KivPel} and \cite[Section
5.9]{Yang_SIAM}).

\begin{figure}[h]
\center{\includegraphics[width=0.48\textwidth]{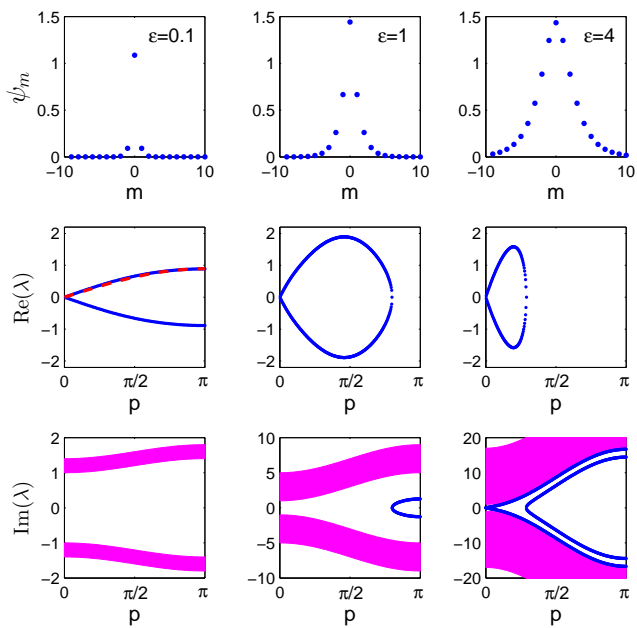}}
\vspace{-0.4cm} \caption{Numerical results for discrete line
solitons (\ref{s:gamma}) bifurcating from the $\Gamma$ point of the
dNLS equation (\ref{dNLS}) on a two-dimensional lattice. Upper row: profiles of discrete line
solitons $\psi_m$; middle row: real parts of eigenvalues $\lambda$
of the spectral stability problem (\ref{lin-eigen})
versus the transverse wavenumber $p$; lower row: imaginary parts of
eigenvalues $\lambda$ versus $p$ (the shaded pink region is the
continuous spectrum). Left column: $\epsilon=0.1$;
middle column: $\epsilon=1$; right column:
$\epsilon=4$. The red dashed line in the middle left panel is the
leading-order analytical approximation (\ref{result-theorem-1}) in
Theorem \ref{theorem-1}. } \label{f:fig1}
\end{figure}

Next we consider discrete line solitons (\ref{s:xpoint}) bifurcating
from the $X$ point in the dNLS equation (\ref{dNLS}). At three values of
$\epsilon$,  eigenvalues of the spectral stability problem
(\ref{lin-eigen2}) for various transverse wavenumbers $p$ in the
interval $[0, \pi]$ are presented in Fig. \ref{f:fig2}. We see that
when $\epsilon=0.01$, a single pair of purely imaginary eigenvalues
exist for all values of $p$ in $(0, \pi]$, in agreement with Theorem
\ref{theorem-2}. These imaginary eigenvalues match the
asymptotic formula (\ref{result-theorem-2}) in Theorem
\ref{theorem-2} (lower left panel). When $\epsilon=0.2$, this pair
of imaginary eigenvalues intersect the continuous spectrum (lower
middle panel). As a consequence, complex eigenvalues appear on the
$p$-interval of $0.97< p < 1.97$ (center panel). When $\epsilon=4$,
additional eigenvalues exist. The eigenvalue curves on the left side
of the $p$-interval (right middle and lower panels) are the
counterparts of similar curves for line solitons in the hyperbolic
2D NLS equation (\ref{hyperbolic-NLS}) (see \cite{DecPel} and
\cite[Section 5.9]{Yang_SIAM}). But the curve of real eigenvalues on
the right side of the $p$-interval (right middle panel) has no
counterpart in the hyperbolic 2D NLS equation
(\ref{hyperbolic-NLS}). These real eigenvalues bifurcate out from
the origin inside the continuous spectrum. As $\epsilon \to
+\infty$, the eigenvalue curves on the left side of the $p$-interval
shrink toward $p=0$ at the asymptotic rate of $\epsilon^{-1/2}$.
Meanwhile, the real-eigenvalue curve on the right side of the
$p$-interval approaches the edge point $p=\pi$, and its width
shrinks at the asymptotic rate of $\epsilon^{-1/2}$.

\begin{figure}[h]
\center{\includegraphics[width=0.48\textwidth]{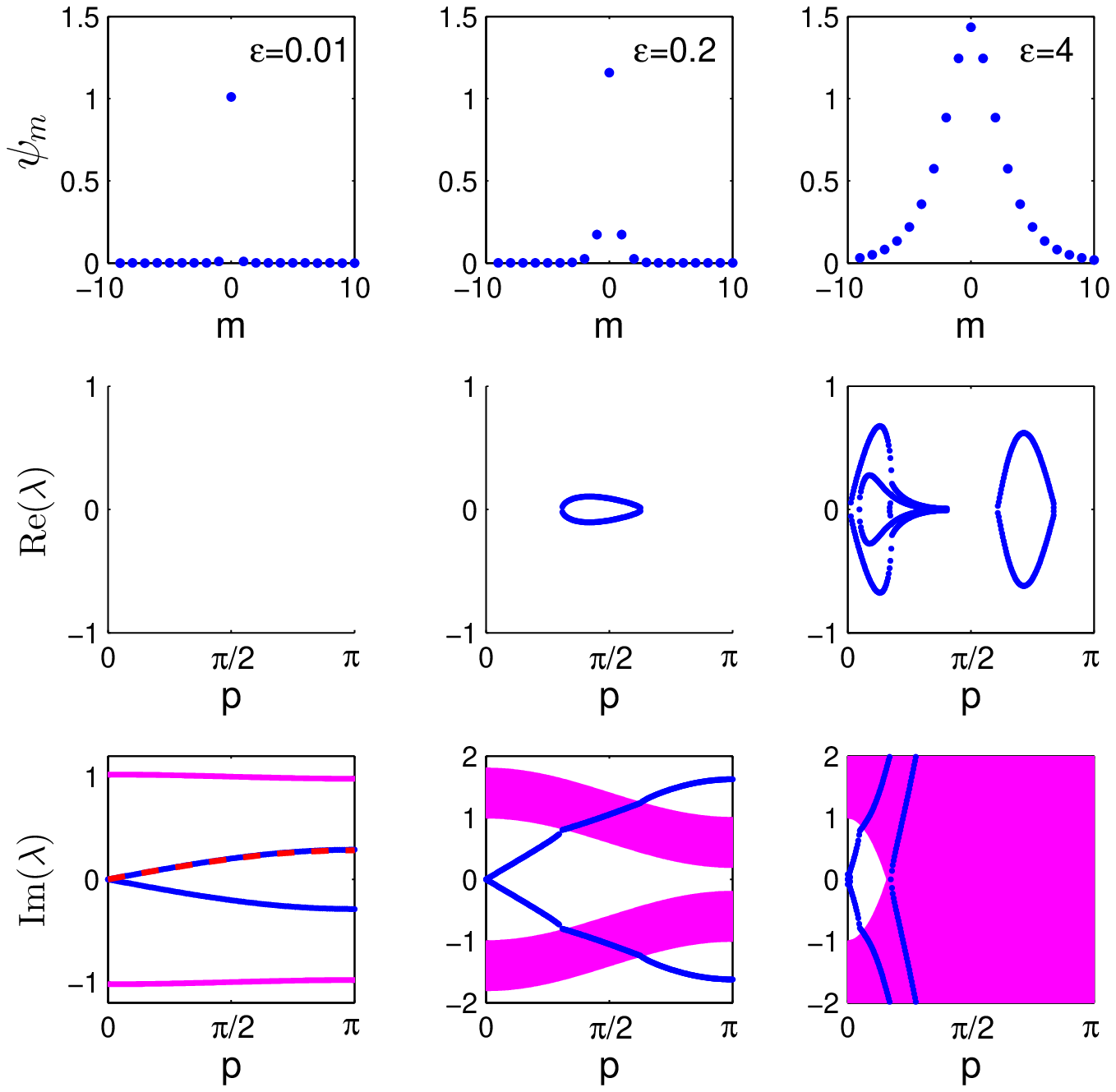}}
\vspace{-0.4cm} \caption{Numerical results for discrete line
solitons (\ref{s:xpoint}) bifurcating from the $X$ point of the
dNLS equation (\ref{dNLS}) on a two-dimensional lattice. Upper row: profiles of discrete line
solitons $\psi_m$; middle row: real parts of eigenvalues $\lambda$
of the spectral stability problem (\ref{lin-eigen2})
versus the transverse wavenumber $p$; lower row: imaginary parts of
eigenvalues $\lambda$ versus $p$ (the shaded pink region is the
continuous spectrum). Left column: $\epsilon=0.01$;
middle column: $\epsilon=0.2$; right column:
$\epsilon=4$. The red dashed line in the lower left panel is the
leading-order analytical approximation (\ref{result-theorem-2}) in
Theorem \ref{theorem-2}.}  \label{f:fig2}
\end{figure}

\subsection{Numerical results for the dNLS equation on a one-dimensional lattice}

Now we consider discrete line solitons (\ref{s:1D}) in the dNLS
equation (\ref{dNLS-diff}) with $\kappa=1$. At three values of
$\epsilon$, eigenvalues of the spectral stability problem
(\ref{lin-eigen3}) for various transverse wavenumbers $p$ are
presented in Fig. \ref{f:fig3}. We see that when $\epsilon=0.1$, a
pair of real eigenvalues exist in the interval $(0,p_0)$, where
$p_0\approx 1.53$. For $p>p_0$, these real eigenvalues become purely
imaginary. This is in agreement with Theorem \ref{theorem-3}.
Quantitatively, these real and imaginary eigenvalues are well
approximated by the leading-order asymptotic formula
(\ref{result-theorem-3}) in Theorem \ref{theorem-3}. At
$\epsilon=2$, we still have the instability band $(0,p_0)$ with $p_0
\approx 1.81$. Meanwhile, two additional branches of purely
imaginary eigenvalues appear over certain $p$-intervals. When
$\epsilon=4$, the instability band $(0,p_0)$ has $p_0 \approx 1.75$,
and one additional branch of purely imaginary eigenvalues exist over
the entire $p$-axis. When $\epsilon \to +\infty$, $p_0(\epsilon) \to
\sqrt{3}$ according to the elliptic 2D NLS equation
(\ref{elliptic-NLS}) \cite{KivPel,Yang_SIAM}.

\begin{figure}[htbp]
\center{\includegraphics[width=0.48\textwidth]{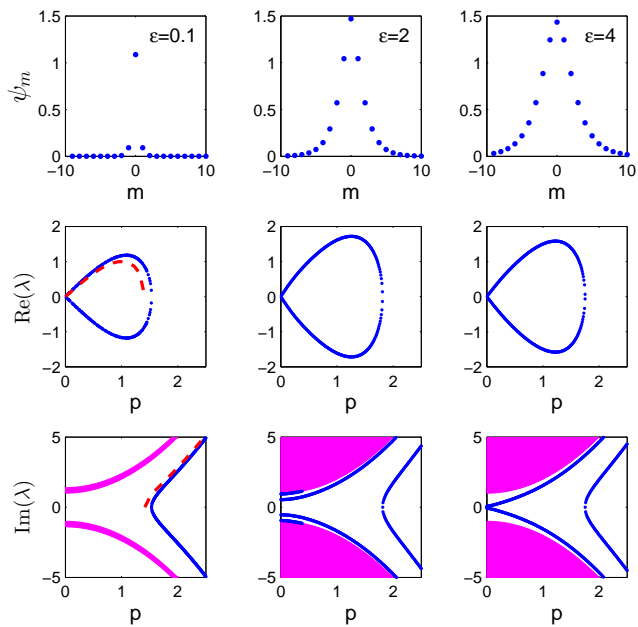}}
\vspace{-0.2cm} \caption{Numerical results for discrete line
solitons (\ref{s:1D}) in the dNLS equation (\ref{dNLS-diff}) on a one-dimensional lattice with
$\kappa=1$. Upper row: profiles of discrete line solitons $\psi_m$;
middle row: real parts of eigenvalues $\lambda$ of the spectral
stability problem (\ref{lin-eigen3}) versus the transverse
wavenumber $p$; lower row: imaginary parts of eigenvalues $\lambda$
versus $p$ (the shaded pink region is the continuous spectrum). Left
column: $\epsilon=0.1$; middle column:
$\epsilon=2$; right column: $\epsilon=4$. The red dashed
lines in the middle and lower left panels are the leading-order
analytical approximations (\ref{result-theorem-3}) in Theorem \ref{theorem-3}.}
\label{f:fig3}
\end{figure}

Next we consider discrete line solitons (\ref{s:1D}) in the dNLS
equation (\ref{dNLS-diff}) with $\kappa=-1$. At the same values of
$\epsilon$, eigenvalues of the spectral stability problem
(\ref{lin-eigen3}) for various transverse wavenumbers $p$ are
presented in Fig. \ref{f:fig4}. We see that when $\epsilon=0.1$, a
pair of imaginary eigenvalues intersect the continuous spectrum
(lower left panel). As a consequence, complex eigenvalues bifurcate
out near $p=1/2$, in agreement with Theorem \ref{theorem-4} (middle
left panel). When $\epsilon=2$, additional eigenvalue bifurcations
occur (middle column). When $\epsilon=4$, eigenvalue curves split
into two parts. The left part is the counterpart of similar curves
for line solitons in the hyperbolic 2D NLS equation
(\ref{hyperbolic-NLS}) \cite{DecPel, Yang_SIAM}, while the right
part is a curve of real eigenvalues at large $p$. Notice that this
eigenvalue structure at $\epsilon=4$ qualitatively resembles that in
Fig. \ref{f:fig2} (right column) for discrete line solitons
bifurcated from the $X$ point in the dNLS equation (\ref{dNLS}). As
$\epsilon \to +\infty$, the left part of this structure
asymptotically approaches eigenvalue curves for line solitons in the
hyperbolic 2D NLS equation (\ref{hyperbolic-NLS}). On the other
hand, the location of the right real-eigenvalue curve moves to $p
\to \infty$ at the asymptotic rate of $\epsilon^{1/2}$, and its
width shrinks at the asymptotic rate of $\epsilon^{-1/2}$.

\begin{figure}[htbp]
\center{\includegraphics[width=0.48\textwidth]{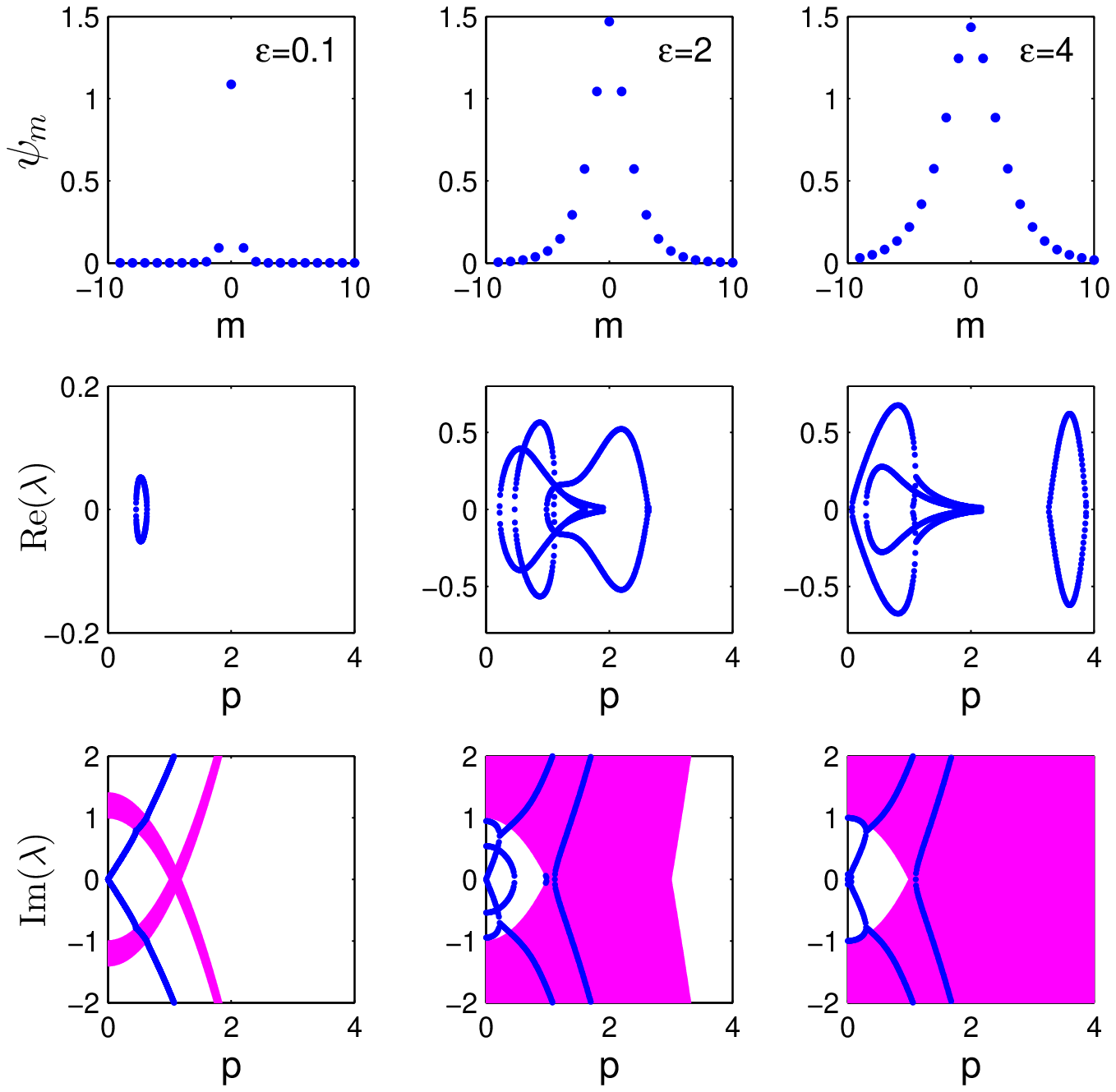}}
\vspace{-0.2cm} \caption{Numerical results for discrete line
solitons (\ref{s:1D}) in the dNLS equation (\ref{dNLS-diff}) on a one-dimensional lattice with
$\kappa=-1$. Upper row: profiles of discrete line solitons $\psi_m$;
middle row: real parts of eigenvalues $\lambda$ of the spectral stability problem (\ref{lin-eigen3})
versus the transverse wavenumber $p$; lower row: imaginary parts of eigenvalues $\lambda$
versus $p$ (the shaded pink region is the continuous spectrum). Left
column: $\epsilon=0.1$; middle column:
$\epsilon=2$; right column: $\epsilon=4$.}
\label{f:fig4}
\end{figure}

Lastly, we quantitatively compare the numerical complex eigenvalues
bifurcating from $p=1/2$ with the analytical formulae for small
$\epsilon$ in Theorem 4. For this purpose, we have numerically
determined the most unstable complex eigenvalue $\lambda_{max}$ and
its $p$-location $p_{max}$ for each $\epsilon$ in the range of
$0<\epsilon<0.3$, and the results are displayed in Fig.
\ref{f:fig5}. For comparison, the leading--order analytical approximations
(\ref{e:lammax}) for $\lambda_{max}$ and $p_{max}$ are also plotted in this
figure. We can see that the analytical and numerical results closely
match each other.

\begin{figure}[htbp]
\center{\includegraphics[width=0.48\textwidth]{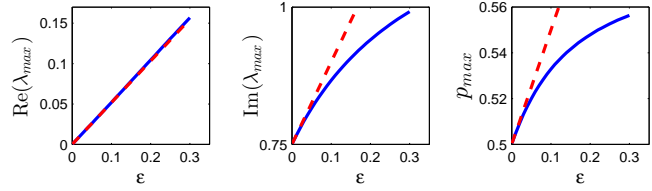}}
\vspace{-0.2cm} \caption{The $\epsilon$-dependence of the most
unstable eigenvalue $\lambda_{max}$ and its corresponding transverse
wavenumber $p_{max}$ for discrete line solitons (\ref{s:1D}) in the
dNLS equation (\ref{dNLS-diff}) with $\kappa=-1$. The red dashed
lines are the leading--order analytical approximations (\ref{e:lammax}) in Theorem 4. } \label{f:fig5}
\end{figure}

\section{Summary and discussion}  \label{sec:summary}

In this article, we have analytically determined the transverse
stability and instability of line solitons in the discrete nonlinear
Schr\"{o}dinger equations on one- and two-dimensional lattices in
the anti-continuum limit. On a two-dimensional lattice, the
fundamental line soliton was proved to be transversely stable
(unstable) when it bifurcates from the $X$ ($\Gamma$) point of the
dispersion surface. On a one-dimensional (stripe) lattice, the
fundamental line soliton was proved to be transversely unstable for
both signs of transverse dispersion. In addition to these
qualitative stability results, we have also derived asymptotic
expressions for unstable eigenvalues and compared them with
numerical results with perfect qualitative and quantitative
agreements.

It is noted that the discrete nonlinear Schr\"{o}dinger equations
are generally used to describe
wave dynamics in the continuous NLS equations with a deep periodic
potential and without inter-band mode coupling. Although the
analytical results in this article nicely explained many of the
numerical results on the transverse stability of line solitons in
the continuous NLS equations \cite{Yang1,Yang2}, they cannot explain
some other notable facts in the continuous models. For instance, our
analytical results for the dNLS equation (\ref{dNLS-diff}) on a
one-dimensional lattice say that all line solitons are transversely
unstable, but the numerical results in \cite{Yang2} showed that in
the continuous model, line solitons near the second Bloch band can
be transversely stable. The reason for this discrepancy
is that line solitons near the second Bloch band contain a strong
coupling between the first and second Bloch bands, which
is neglected in the discrete NLS equation.
How to analytically explain the existence of transversely-stable
line solitons in the continuous NLS equations with a one-dimensional
lattice is still an open issue which merits further study.

{\bf Acknowledgment:} The work of D.P. is supported in part by NSERC.
The work of J.Y. is supported in part by the Air Force Office of Scientific Research
(Grant USAF 9550-12-1-0244) and the National Science Foundation
(Grant DMS-0908167).

\end{document}